\newcommand{\myuline}[1]{%
  \uline{\phantom{#1}}%
  \llap{\contour{white}{#1}}%
}
\renewcommand{\emph}{\textit}
\newtheorem{theorem}{Theorem}
\newtheorem{definition}{Definition}
\newtheorem{proposition}{Proposition}
\newtheorem{lemma}{Lemma}
\newtheorem{corollary}{Corollary}
\newtheorem{observation}{Observation}
\newtheorem{claim}{Claim}
\newcommand\xlrsquigarrow{%
  \mathrel{%
    \vcenter{%
      \hbox{%
        \begin{tikzpicture}
          \path[
            draw,
            >={Implies[]},
            <->,
            double distance between line centers=1.5pt,
            decorate,
            decoration={
              zigzag,
              amplitude=0.8pt,
              segment length=4pt,
              pre length=4pt,
              post length=4pt,
            },
          ]   
            (0,0) -- (20pt,0);
        \end{tikzpicture}%
      }%
    }%
  }%
}
\newcounter{sarrow}
\newcommand\xrsquigarrow[1]{%
\stepcounter{sarrow}%
\mathrel{\begin{tikzpicture}[baseline= {( $ (current bounding box.south) + (0,-0.5ex) $ )}]
\node[inner sep=.5ex] (\thesarrow) {$\scriptstyle #1$};
\path[draw,<-,decorate,
  decoration={zigzag,amplitude=0.7pt,segment length=1.2mm,pre=lineto,pre length=4pt}] 
    (\thesarrow.south east) -- (\thesarrow.south west);
\end{tikzpicture}}%
}
\begin{document}


\title[]{
Identical particles as a genuine non-local resource}


\author{Pawel \surname{Blasiak}}
\email{pawel.blasiak@ifj.edu.pl}
\affiliation{Institute for Quantum Studies, Chapman University, Orange, CA 92866, USA}
\affiliation{Institute of Nuclear Physics, Polish Academy of Sciences, 31342 Krak\'ow, Poland}

\author{Marcin \surname{Markiewicz}}
\email{marcinm495@gmail.com}
\affiliation{Institute of Theoretical and Applied Informatics, Polish Academy of Sciences, 44100 Gliwice, Poland}
\affiliation{International Centre for Theory of Quantum Technologies, University of Gda\'nsk, 80308 Gda\'nsk, Poland}


\begin{abstract}
All particles of the same type are indistinguishable, according to a fundamental quantum principle. This entails a description of many-particle states using symmetrised or anti-symmetrised wave functions, which turn out to be formally entangled. However, the measurement of individual particles is hampered by a mode description in the second-quantised theory that masks this entanglement. Is it nonetheless possible to use such states as a resource in Bell-type experiments? More specifically, which states of identical particles can demonstrate non-local correlations in passive linear optical setups that are considered purely classical component of the experiment? Here, the problem is fully solved for multi-particle states with a definite number of identical particles. We show that \textit{all} fermion states and \textit{most} boson states provide a sufficient quantum resource to exhibit non-locality in classical optical setups. The only exception is a special class of boson states that are reducible to a single mode, which turns out to be locally simulable for any passive linear optical experiment. This finding highlights the connection between the basic concept of particle indistinguishability and Bell non-locality, which can be observed by classical means for almost every state of identical particles.


\end{abstract}


\maketitle

\onecolumngrid
\vspace{-0.2cm}

\begin{flushright}
\small\textsl{“No acceptable explanation for the miraculous identity of particles of the same type has ever been put forward.\\That identity must be regarded, not as a triviality, but as a central mystery of physics.”}\vspace{0.1cm}

{-- C.~W.~Misner, K.~S.~Thorne \& J.~A.~Wheeler\\\textit{Gravitation}, 1973.}
\end{flushright}\vspace{0.1cm}



\twocolumngrid

\section{\textsf{Introduction}}\vspace{-0.3cm}

Bell nonlocality stands out as a hallmark of quantum theory, defying our classical conception of reality~\cite{Be64,Be93,BrCaPiScWe14,Sc19}. It boils down to strange correlations between outcomes of space-time separated measurements that \textit{cannot} be explained in the usual causal manner implied by the experimental design, i.e., only as an effect of preparation at the source. 
Quantum theory predicts the appearance of such correlations when the system is prepared in an entangled state, which has been confirmed in a number of ingenious experiments~\cite{AsDaRo82,HeBeDrReKaBlRu15,GiVeWeHaHoPhSt15,ShMeChBiWaStGe15,As15}. Inherent to the notion of entanglement is the concept of a composite system described as the tensor product of subsystems, which presupposes that a full range of measurements on individual subsystems can be made~\cite{Gi91,HoHoHoHo09}. 
This raises the subtle question of whether a given quantum state can actually demonstrate Bell non-locality. The answer will crucially depend on how the subsystems are defined and which measurements are allowed in the experiment. 

This question is particularly significant for identical particles. According to the fundamental principle of indistinguishability, multi-particle systems are described by entangled states, due to the symmetrisation or anti-symmetrisation of the wave function~\cite{FeLeSa65,FeWa71}. However, in the second-quantised theory, this fact is obscured by moving to the mode description of the system where individual particles cannot be directly addressed, thereby preventing explicit probing of subsystems where this entanglement resides. A sensible criterion of entanglement for identical particles is thus challenging~\cite{WiVa03,BeFlFrMa20}, which suggests a shift in focus to a more practical problem of whether and how states of identical particles can result in non-local correlations observable in real experiments. This leads us to deliberately evade considering the notion of {entanglement} for a system of identical particles in favour of {non-local correlations}, since the latter directly relates to what is experimentally observed rather than being a mathematically driven construct.

\begin{figure*}[t]
\begin{center}
\quad\includegraphics[width=1.6\columnwidth]{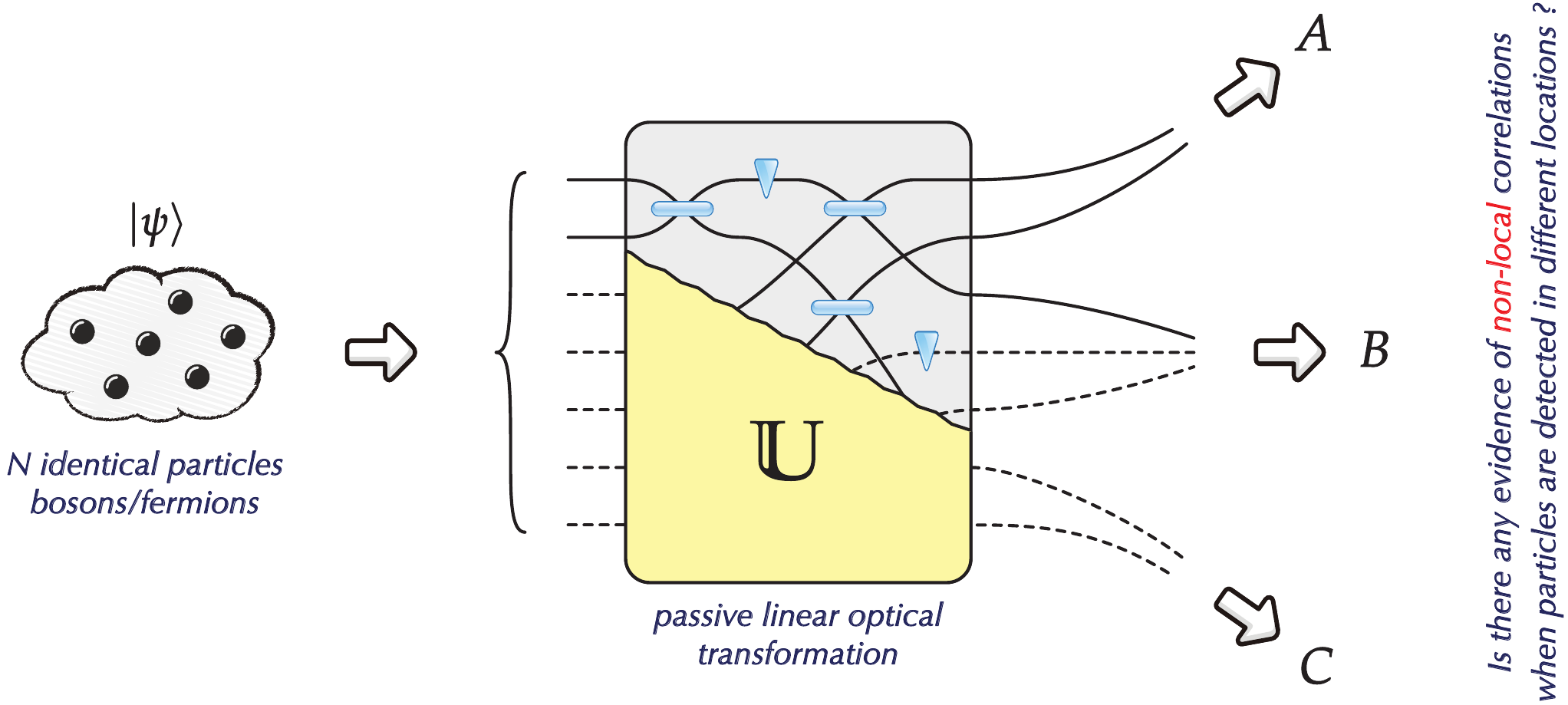}
\caption{\label{Diag_Indistinguishable}\textbf{Passive linear optical experiment.} A quantum state $\ket{\psi}$ of $N$ identical particles (bosons or fermions) enters a classical optical setup built of paths, mirrors, phase shifters, beam splitters, and detectors. Arranged in various configurations, optical elements implement a unitary transformation $\mathbb{U}$ on each particle~\cite{ReZeBeBe94}. Then, the modes/paths are distributed to different locations $A, B, C ...$, where the particles can be further processed and finally detected. Each experiment has the potential to reveal non-local correlations in the observed statistics.\vspace{-0.2cm}}
\end{center}
\end{figure*}

In this work, we treat passive linear optics (comprised of mirrors, phase shifters, beam splitters, and detectors) as a purely classical component of an experiment. From this perspective, any non-classical behaviour observed must have its origin in the remaining part of the experiment, and thus be attributed to the state that is fed into the classical optical setup. See Fig.~\ref{Diag_Indistinguishable} for illustration.
This view provides a straightforward operational way to assess non-classical potential of a given quantum state (with respect to passive linear optics).
Our research problem has a simple expression in the following question. For a given quantum state of identical particles, we ask:
\begin{eqnarray}\nonumber
\hspace{0.2cm}\textit{\parbox{0.89\columnwidth}{\textls[0]{Does there exist some passive linear optical experiment capable of demonstrating Bell non-locality?\vspace{0cm}}}}
\end{eqnarray}

\noindent A positive answer indicates that the state is a genuine \textit{non-local resource}. Conversely, the lack of any such experiment means that the correlations observed in every optical setup can be explained by a generic local hidden variable model. Then, the state \textit{fails} to provide adequate non-local resources. For example, it was shown that every single-particle state, for both bosons and fermions, can be simulated locally in a generic way~\cite{Bl18}.

Note that the essence of such a formulated problem lies in the arbitrary nature of passive linear optical experiments, which can be composed in any possible manner. Furthermore, our query covers every single state within a sufficiently broad class of interest. In this article, the problem is fully solved for multi-particle states with a \textit{definite number} of identical particles for both boson and fermion statistics.
\begin{eqnarray}\nonumber
\textit{\parbox{0.89\columnwidth}{\textls[0]{We show that \myuline{all} fermion states and the vast \myuline{majority} of boson states provide a genuine non-local resource (with respect to passive linear optics), except for a narrow class of states that are reducible to a single mode.}}}
\end{eqnarray}
See Fig.~\ref{Fig-Clasiffication} for illustration. This result reveals an intriguing link between the basic concept of particle identity and the puzzling feature of Bell's non-locality within the framework of classical optical setups.

\begin{figure}[t]
\begin{center}
\quad\includegraphics[width=\columnwidth]{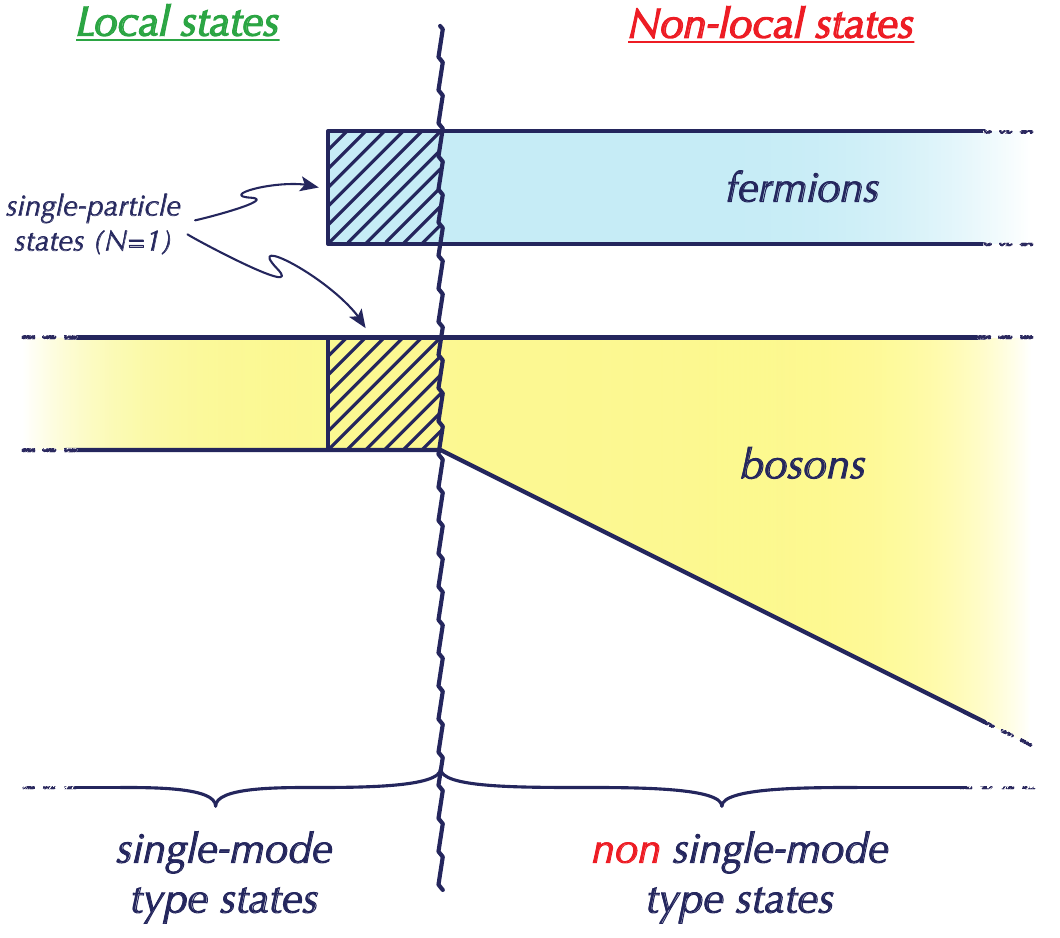}
\caption{\label{Fig-Clasiffication}\textbf{Classification of Bell non-local states (with respect to passive linear optics).} All fermion states provide a non-local resource, except for single-particle states ($N\!=\!1$). The boson statistics has a wider variety of states, which in the local sector includes all states reducible to a single mode. In general, for states with a \textit{definite number} of identical particles, we show that the distinction between \textit{Bell local} vs \textit{Bell non-local} states corresponds to a straightforward division into \textit{single-mode type} vs \textit{non single-mode type} states.\vspace{-0.2cm}}
\end{center}
\end{figure}

\section{\textsf{Results}}\vspace{-0.3cm}

\vspace{0.2cm}\noindent\textbf{\textsf{Informal statement of the problem}}\vspace{0.2cm}

Suppose we lived in a purely classical world and someone presented us with a quantum state of $N$ identical particles (either fermions or bosons). \textit{Would we be able to recognise this new quality with our classical tools?} By classical tools, we mean standard passive linear optics, which includes mirrors, phase shifters, beam splitters, and detectors that can be arranged in various configurations. What can be imagined is that such a quantum state is fed into an optical circuit, processed in any conceivable way, and then the particles are detected in different locations. See Fig.~\ref{Diag_Indistinguishable}. The pattern of correlations observed in such an experiment will be deemed non-classical only if they \textit{cannot} be explained by a local model with particles (and/or hidden variables) propagating along the paths as specified by the experimental design. Then, the state can be called a genuine \textit{non-local (or non-classical) resource}. Otherwise, there is an account of every optical experiment that does not require any non-local mechanisms, and therefore the state should not give rise to surprise.

The problem is formulated by thinking of the quantum state as a potential \textit{non-classical resource} with respect to a fixed set of operations (i.e., passive linear optics and detectors).
Specifically, we are interested in the following question:
\begin{eqnarray}\nonumber
\textit{\parbox{0.89\columnwidth}{\textls[0]{Which quantum states of identical particles can lead to non-local behaviour in classical experiments composed of passive linear optical transformations and detectors?}}}
\end{eqnarray}
For each particular state, it has two possible answers:
\begin{itemize}
\item[\textsc{Yes:}]{The state is a genuine \textit{non-local resource}. It means that the state can be used in some experimental procedure to demonstrate non-local correlations in space-time separated regions (by violating some Bell-type inequality in some specific experiment).}
\item[\textsc{No:}]{The state does \textit{not} provide a \textit{non-local resource}. That is, the correlations observed in any conceivable experiment can be explained in a local manner (via some generic model with hidden variables propagating locally following the experimental design).}
\end{itemize}

\noindent This splits the space of quantum states into two disjoint classes. The first class can fuel classical optical setups to operate at a non-local level. The second class appears to be classical-like, i.e. simulable with local resources, and hence one should not expect a non-classical advantage in passive linear optics with those states.

Here, we are interested in the classification of all quantum states with a \textit{definite} number of particles ($N$ is fixed). Notably, for a single particle ($N\!=\!1$), the question was resolved in Ref.~\cite{Bl18}, where a generic local hidden variable model of arbitrary passive linear optical experiments was constructed (i.e., single-particle states do not provide a non-local resource). However, a far more complex and interesting problem is that of multi-particle systems ($N\!\geqslant\!2$). We show that \textit{all} fermion states and \textit{most} boson states provide a genuine \textit{non-local resource}. 

\vspace{0.4cm}\noindent\textbf{\textsf{Passive linear optical toolkit}}\vspace{0.2cm}

In the \textit{occupation number representation} (\textit{Fock space}), a general state of $N$ identical particles is described by
\begin{eqnarray}\label{N-particle-state-psi}
\ket{\psi}&=&\sum_{n_1+\,...\,+n_M=N}\psi_{n_1\,...\,n_M}\ket{n_1\,...\,n_M},
\end{eqnarray}
which is as a (complex) combination of states $\ket{n_1\,...\,n_M}$ with a definite number of particles $n_i$ in the respective modes $i=1,...\,,M$. Those numbers are constrained by the particle statistics: for \textit{bosons} $n_i\in\mathbb{N}$, and for \textit{fermions} $n_i=0,1$. In the following, we will consider situations where the total number of particles $N$ is \textit{fixed} and \textit{without} any restriction on the number of modes $M$. Furthermore, since we are concerned with locality, we will focus on optical modes of the system, i.e., think of each mode as a different path along which particles can propagate.

It will be convenient to use the standard representation in terms of the \textit{creation}/\textit{annihilation} operators, which add/subtract a particle in the respective mode of the system. Then Eq.~(\ref{N-particle-state-psi}) can be written the form
\begin{eqnarray}\label{N-particle-state-adag}
\ket{\psi}&=&\sum_{n_1+\,...\,+n_M=N}\psi_{n_1\,...\,n_M}\ \prod_{i=1}^M\,\frac{{a_i^\dag}^{n_i}}{\sqrt{n_i!}}\ket{0},
\end{eqnarray}
where $\ket{0}$ is the \textit{vacuum} state and the \textit{creation} and \textit{annihilation} operators, $a^\dag_i$ and $a_i$, satisfy the usual commutation relations specific for the respective \textit{boson} or \textit{fermion} statistics~\cite{FeLeSa65,FeWa71}.\footnote{For reference, we note that the conventional commutation relations are defined as follows:  $[a_i,a_j]_{\mp}\!=\![a_i^\dag,a_j^\dag]_{\mp}\!=\!0\,,\,[a_i,a_j^\dag]_{\mp}\!=\!\delta_{ij}$, where the difference between the \textit{boson/fermion} statistics boils down to the use of the \textit{commutation/anti-commutation} relations ($\mp$). We use their canonical representation whose action in a given mode for \textit{bosons} is given by $a_i^\dag\ket{n_i}\!=\!\sqrt{n_i+1}\ket{n_i}$ and $a_i\ket{\,n_i}\!=\!\sqrt{n_i}\ket{n_i-1}$, while for \textit{fermions} we have $a_i^\dag\ket{0_i}\!=\!\ket{1_i}$, $a_i\ket{\,1_i}\!=\!\ket{0_i}$ and $a_i^\dag\ket{1_i}\!=\!a_i\ket{\,0_i}\!=\!0$. This complies with the restriction on the occupation numbers for \textit{bosons} $n_i\in\mathbb{N}$, and for \textit{fermions} $n_i=0,1$. Then, the \textit{number states} can be written in a compact form $\ket{n_1\,...\,n_M}\ =\ \frac{1}{\sqrt{n_1!\,...\,n_M!}}\ a_{  1}^{  \dag\, n_1}...\,a_{  M}^{  \dag\, n_M}\ket{0}$. See Refs.~\cite{FeLeSa65,FeWa71} for a full exposition.}

In the following, we will be interested in transformations of the Fock states that can be implemented with passive linear optics, i.e., a sequence of single-mode and two-mode operations (or gates) that are conventionally realised by mirrors, phase shifters and beam splitters. Such transformations are described as follows
\begin{eqnarray}\label{U-a-dag}
a_i^\dag&\xymatrix{\ar[r]^{\atop \mathbb{U}} &}& {a'\!}_i^{\,\,\dag}\ =\ \sum_{  j=1}^{  M}{U}_{ij}\,a^{  \dag}_{  j},
\end{eqnarray}
for each mode $i=1,\,...\,,M$ of the system, and $\mathbb{U}\equiv [U_{ij}]$ is a unitary transformation on $\mathbb{C}^M$. Furthermore, it is known that any unitary $\mathbb{U}$ on a single-particle space $\mathbb{C}^M$ can be realised by passive linear optics (i.e., can be obtained as a sequence of single- and two-mode operations implemented by mirrors, phase shifters and beam splitters)~\cite{ReZeBeBe94}. However, this is not enough to generate any transformation on the multi-particle Fock space since under passive linear optics, the state space splits into a myriad of non-equivalent continuous classes whose full classification is beyond reach, see Ref.~\cite{MiRoOsLe14}. 

For further convenience, let us formally distinguish a special class of states that will play an important role in the following discussion.

\begin{definition}\label{Def-single-mode-type}\ \\
A state is said to be of a \textbf{{single-mode type}} if it can be reduced to a single mode by passive linear optics. That is, for a state in Eq.~(\ref{N-particle-state-psi})/(\ref{N-particle-state-adag}), it means that there exists a unitary $\mathbb{U}$ on $\mathbb{C}^M$ (or a sequence of optical elements) such that
\begin{eqnarray}\label{single-mode-def}
\ket{\psi}&=&\mathbb{U}\ket{N,0,...\,,0}\,=\,\frac{{{a'\!}_1^{\,\,\dag}}^{N}}{\sqrt{N!}}\ket{0}.
\end{eqnarray}
\end{definition}
Equivalently, those states can be thought of as having evolved from the initial state in which all the particles were in the same mode. 
A straightforward consequence of Ref.~\cite{ReZeBeBe94} is that all single-particle states ($N\!=\!1$) are of a single-mode type (for both bosons and fermions). Clearly, in the multi-particle case ($N\!\geqslant\!2$), the class of single-mode type states is nontrivial only for bosons (it is empty for fermions due to the Pauli exclusion principle). We hasten to note that \textbf{Definition~\ref{Def-single-mode-type}} boils down to a few easy-to-check conditions; see \textbf{Supplementary Information} for a discussion.

\vspace{0.4cm}
\noindent{\textbf{\textsf{Bell non-locality in optical experiments}}}\vspace{0.2cm}

Every experiment takes place in space and time. This is particularly manifest in optical setups that are composed of basic elements distributed in different positions and interconnected by optical paths into a larger circuit. In a typical optical experiment, an initial state $\ket{\psi}$ is fed into a circuit that is responsible for evolving the system and registering the particles at specific locations. Here, without the loss of generality, the basic components of passive linear optical setups can be considered as single- and two-mode gates, such as mirrors, phase shifters, beam splitters and number-resolving detectors~\cite{ReZeBeBe94}. 

The spatio-temporal arrangement of optical designs has implications for the possible interpretations of the experiment. Namely, it is common to think that the flow of information follows the pattern of interconnections in the circuit. That is, its physical carriers propagate along the paths of the circuit (whether they are particles, fields or hidden variables), and the information is processed in a modular manner. This view stems from the principle of local causality, which is especially appealing for space-like separated arrangements. See Fig.~\ref{Diag_Indistinguishable}. It was original Bell's insight that quantum correlations can be at odds with the causal structure implied by the experimental design~\cite{Be64,Be93,BrCaPiScWe14,Sc19}. In the optical framework, this idea has the following expression.

\begin{definition}\label{non-local-behaviour}\ \\
An optical experiment demonstrates \textbf{non-local behaviour} (or \textbf{Bell non-locality}) if the observed correlations \myuline{cannot} be explained in a cause-and-effect manner following the pattern of interconnections in the experimental design.
\end{definition}

In other words, there is no local hidden variable model explaining the observed experimental statistics, where locality requires that the variables propagate from one optical element to another along the paths determined by the circuit. In practice, demonstrating non-locality in a given experiment boils down to a violation of some Bell-type inequality derived from the corresponding causal structure. The opposite statement can only be justified by constructing an explicit local hidden variable model that is compatible with the experimental setup.


\vspace{0.4cm}
\noindent{\textbf{\textsf{Main result}}}\vspace{0.2cm}

An optical experiment can be seen as a test of certain properties of the state that is supplied to the circuit. In \textbf{Definition~\ref{non-local-behaviour}}, the question of non-locality is posed for a given experimental setup, which includes both the state and the optical design. In this paper, we are interested in the property of the state \textit{itself} without restricting it to a particular implementation. We ask whether the state can demonstrate Bell non-locality by admitting a wider range of possible arrangements. This idea comes from thinking about the state as a resource of non-locality in a larger class of experimental designs. Here we focus on the use of passive linear optics, as defined below.

\begin{definition}\label{non-local-resource}\ \\
A state $\ket{\psi}$ is a genuine \textbf{non-local resource} with respect to passive linear optics if it is capable of manifesting non-local behaviour in \myuline{some} passive linear optical experiment.
\end{definition}

Note that to demonstrate non-locality of a given state, it is sufficient to provide a \textit{single} example of a passive linear optical circuit in which Bell non-locality can be observed. However, showing the opposite for a given state requires constructing a generic local hidden variable model explaining correlations obtained in \textit{every} conceivable experiment with passive linear optical setups.

Our main result takes the form of a simple criterion for states with a \textit{definite number} of particles in Eq.~(\ref{N-particle-state-psi})/(\ref{N-particle-state-adag}).

\begin{theorem}\label{theorem}\ \\
A state $\ket{\psi}$ is a  \textbf{non-local resource} with respect to passive linear optics if, and only if, it is \myuline{not} of the \textbf{single-mode type}.
\end{theorem}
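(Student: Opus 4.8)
The plan is to prove the two implications separately, unifying the hard direction by reducing the existence of a Bell violation to \emph{Gisin's theorem}: I will show that a state which is \emph{not} of the single-mode type can always be steered, within the passive linear optical toolkit, into a pure bipartite \emph{entangled} state, which then necessarily violates a CHSH inequality. Before either direction I would establish the structural fact behind Definition~\ref{Def-single-mode-type}: a state $\ket{\psi}$ is of the single-mode type if, and only if, its one-body density matrix, with entries $\rho^{(1)}_{ij}\propto\langle\psi|a_i^\dag a_j|\psi\rangle$, has rank one. The forward direction is immediate, since $\ket{\psi}=\tfrac{1}{\sqrt{N!}}\,{a'}_1^{\dag\,N}\ket{0}$ with ${a'}_1^\dag=\sum_k\phi_k a_k^\dag$ gives $\langle\psi|a_i^\dag a_j|\psi\rangle=N\,\phi_i^{*}\phi_j$; for the converse one rotates the unique occupied natural orbital to mode $1$ and observes that $\langle n_1\rangle=N$ at fixed total number $N$ forces $\ket{\psi}=\ket{N,0,\dots,0}$. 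As passive linear optics acts as a single-particle unitary on $\mathbb{C}^M$, the rank of $\rho^{(1)}$ is an invariant, so the single-mode class is exactly the rank-one class (empty for fermions with $N\geqslant2$, since Pauli forbids rank one).

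For the direction ``single-mode type $\Rightarrow$ not a resource'', I would first note that this class is \emph{closed} under passive linear optics: any $\mathbb{U}$ sends a single creation operator to another single creation operator, so after \emph{any} optical circuit the state is still $N$ bosons in one single-particle mode $\ket{\chi}=\sum_j\chi_j\ket{j}$. Its detection statistics are therefore the multinomial distribution with single-trial probabilities $|\chi_j|^2$, i.e.\ exactly the statistics of $N$ \emph{independent and identically distributed} particles; in first quantisation the state is the product $\ket{\chi}^{\otimes N}$. I would then run $N$ independent copies of the generic single-particle local hidden variable model of Ref.~\cite{Bl18}, assigning each particle its own hidden variable and letting it propagate locally through the circuit. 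Because the particles are uncorrelated, the joint model factorises and reproduces the full multinomial statistics of every passive linear optical experiment, so single-mode type states are never a resource.

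For the converse ``not single-mode type $\Rightarrow$ resource'', the rank criterion provides at least two occupied orthogonal natural orbitals, which an initial unitary diagonalises to modes $1$ and $2$. The construction then routes these two modes into a central beam-splitter network together with fresh vacuum rails heading to Alice and Bob, and \emph{heralds}---using the allowed number-resolving detectors at the central station as part of state preparation---on an event projecting the two-orbital content onto the one-particle-per-party sector. For suitable angles this prepares, in a dual-rail encoding (Alice: modes $1{,}2$; Bob: modes $3{,}4$), a pure two-qubit state such as
\[
\ket{\Phi}\ \propto\ \bigl(a_1^\dag a_4^\dag \pm a_2^\dag a_3^\dag\bigr)\ket{0},
\]
whose sign records the boson/fermion exchange phase. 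Being a pure entangled two-qubit state, Gisin's theorem guarantees that $\ket{\Phi}$ violates a CHSH inequality, and the required local measurements---a variable beam-splitter and phase followed by detection at each party---are themselves passive linear optics. Exhibiting this single experiment suffices, by Definition~\ref{non-local-resource}, to certify a genuine non-local resource.

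The hard part will be the \emph{uniform structural reduction}: showing that for \emph{every} non single-mode state, and for both statistics, the heralded state can always be made \emph{entangled} (not merely product) across the Alice/Bob cut, with nonzero herald probability. Purity itself is free, since projecting the pure input $\ket{\psi}$ onto any detection outcome leaves a pure state; the genuine issue is guaranteeing that both branches in $\ket{\Phi}$ are simultaneously present. I would argue that detecting the $N-2$ spectator particles in fixed modes leaves a two-particle amplitude in modes $1$--$4$ whose two-party coherence is nonzero precisely when a second orbital is occupied, i.e.\ when $\rho^{(1)}$ has rank $\geqslant2$; for bosons one must additionally check that bunching does not collapse the herald back to a single mode (which would render it product and locally simulable). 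Keeping the herald at a central preparation station, upstream of the space-like separated measurement regions, also ensures the final CHSH test is a legitimate Bell experiment rather than a post-selection artefact. Verifying that this extraction succeeds across the continuum of linear-optics-inequivalent classes~\cite{MiRoOsLe14} is the crux of the argument.
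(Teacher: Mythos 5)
Your easy direction and your structural preliminaries are sound and in places genuinely different from the paper. The rank-one characterisation of single-mode type states via the one-body matrix $\rho^{(1)}_{ij}\propto\langle\psi|a_i^\dag a_j|\psi\rangle$ is correct (rotating to the natural-orbital basis and using $\langle n_1\rangle=N$ with fixed total $N$ does force $\ket{N,0,\dots,0}$) and is a cleaner invariant formulation than the paper's coefficient conditions in Eq.~(\ref{single-mode-coeff-product}). Likewise, for \textbf{Claim~\ref{claim2}} your reduction --- a single-mode type state is, in first quantisation, $N$ independent particles in the same single-particle state, so one simulates it by $N$ independent copies of the single-particle local model of Ref.~\cite{Bl18} --- is a legitimate shortcut; the paper instead builds an explicit $(\vec{\alpha},\vec{k})$ ontology with a multinomial update rule and proves an equivariance lemma, but the two constructions encode the same i.i.d.\ structure, and yours buys brevity at the cost of leaning on the genericity of \cite{Bl18}.

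The hard direction, however, has a genuine gap, and you have located it yourself without closing it. Your claim that the heralded two-particle remnant has nonzero Alice--Bob coherence ``precisely when $\rho^{(1)}$ has rank $\geqslant 2$'' is false: the NOON state $\beta_0\ket{0N}+\beta_N\ket{N0}$ has rank-two $\rho^{(1)}$, yet detecting the $N-2$ spectator particles in fixed auxiliary modes collapses the superposition --- for $N\geqslant 3$ every heralded two-particle state obtained this way has at most one nonvanishing term among $\{a_1^{\dag 2},a_1^\dag a_2^\dag,a_2^{\dag 2}\}$, hence is itself single-mode type and passes the Yurke--Stoler test without any violation. The paper's resolution is the \emph{quantum erasure} modification (Fig.~\ref{Fig-YS-QE-M2}): a further Hadamard on the auxiliary rails before detection erases the which-branch information carried by the spectators, so that heralding preserves the $\ket{N0}$--$\ket{0N}$ coherence and \textbf{Lemma~\ref{lemma}} forces $\beta_0=0$ or $\beta_N=0$. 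Without this (or an equivalent device) your extraction does not succeed ``for every non single-mode state.'' Two further ingredients are also missing: the resolution of the no-violation constraints into the multinomial product form (\textbf{Lemma~\ref{psi-M2-lemma}}, a recurrence argument in which the NOON case is exactly the degenerate branch), and the induction on the number of modes $M$ that lifts the two-mode result to general states --- the paper's inductive step is itself a nontrivial two-stage heralding argument, not a routine bookkeeping exercise. As it stands your proposal proves the contrapositive only for states whose spectator-heralded two-particle components happen to retain coherence, which is strictly weaker than \textbf{Claim~\ref{claim1}}.
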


This provides a straightforward classification of multi-particle states based on a simple condition \textbf{Definition~\ref{Def-single-mode-type}} and easy-to-understand interpretation. Interestingly, the only states that lack the non-local potential are those originating in the same mode. This leads to the following characteristics of states for both boson and fermion statistics; see illustration in Fig.~\ref{Fig-Clasiffication}. 

\begin{corollary}\ \\
Neither fermions nor bosons provide a non-local resource for \myuline{any} single-particle state ($N\!\!=\!\!1$). For the multi-particle case ($N\!\geqslant\!2$), \myuline{all} fermion states are non-local resources, while \myuline{some} boson states are not (i.e., those of a single-mode type). 
\end{corollary}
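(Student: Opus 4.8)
The plan is to read the corollary off \textbf{Theorem~\ref{theorem}}, which has already reduced the question ``is $\ket{\psi}$ a non-local resource?'' to the purely structural question ``is $\ket{\psi}$ \emph{not} of the single-mode type?'' of \textbf{Definition~\ref{Def-single-mode-type}}. Thus the whole corollary follows once I determine, separately for each particle number $N$ and each statistics, how large the single-mode-type class is. Concretely, I would examine the passive-linear-optical orbit of the reference state $\ket{N,0,\dots,0}$, i.e.\ the set $\{\,\mathbb{U}\ket{N,0,\dots,0}:\mathbb{U}\ \text{unitary on}\ \mathbb{C}^M\}$ appearing in Eq.~(\ref{single-mode-def}), and decide whether it is empty, everything, or something in between; \textbf{Theorem~\ref{theorem}} then converts this directly into the local/non-local dichotomy.

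For the single-particle case $N\!=\!1$ I would invoke the consequence of Ref.~\cite{ReZeBeBe94} already noted in the text: every single-particle state is of the single-mode type. The argument is a one-liner: a normalised state $\ket{\psi}=\sum_{i}\psi_i\,a_i^\dag\ket{0}$ equals $\mathbb{U}\ket{1,0,\dots,0}$ for any unitary $\mathbb{U}$ whose first column is the unit vector $(\psi_i)_i$, and such a $\mathbb{U}$ exists by Gram--Schmidt and is realised optically by~\cite{ReZeBeBe94}. Since the boson and fermion single-particle spaces coincide, this covers both statistics, and \textbf{Theorem~\ref{theorem}} immediately gives that no $N\!=\!1$ state is a non-local resource.

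For $N\!\geqslant\!2$ fermions I would show the single-mode-type class is \emph{empty}. The candidate reference state $\ket{N,0,\dots,0}\propto (a_1^\dag)^N\ket{0}$ vanishes for $N\!\geqslant\!2$, because the anti-commutation relations force $(a_1^\dag)^2=0$ (equivalently, the occupation numbers are capped at $n_i\in\{0,1\}$, so $\ket{N,0,\dots,0}$ is not even a legitimate basis vector). As the induced action of a single-particle unitary $\mathbb{U}$ on Fock space is linear and hence sends the zero vector to the zero vector, no nonzero fermion state can arise as $\mathbb{U}\ket{N,0,\dots,0}$. Therefore every fermion state with $N\!\geqslant\!2$ is \emph{not} of the single-mode type, and by \textbf{Theorem~\ref{theorem}} all such states are non-local resources.

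For $N\!\geqslant\!2$ bosons I would exhibit the opposite. Now $\ket{N,0,\dots,0}=\frac{(a_1^\dag)^N}{\sqrt{N!}}\ket{0}$ is a perfectly good nonzero bosonic state, since the commutation relations impose no cap on occupation, so it is of the single-mode type (taking $\mathbb{U}$ to be the identity), as is its entire optical orbit; \textbf{Theorem~\ref{theorem}} then declares each such state \emph{not} a non-local resource, establishing that \emph{some} boson states fail to be non-local resources. I do not expect a substantial obstacle here: with \textbf{Theorem~\ref{theorem}} in hand the corollary is bookkeeping about when the orbit of $\ket{N,0,\dots,0}$ is empty. The only point meriting care is the $N\!=\!1$ claim, where I must check that the cited decomposition genuinely guarantees reachability of every single-particle state from $\ket{1,0,\dots,0}$, which as noted reduces to the elementary fact that every unit vector is the first column of some unitary. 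All the real difficulty lives in \textbf{Theorem~\ref{theorem}} itself, not in this consequence.
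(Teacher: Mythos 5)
Your proposal is correct and follows essentially the same route as the paper: the corollary is read off \textbf{Theorem~\ref{theorem}} by sizing the single-mode-type class in each sector, exactly as the paper does when it notes after \textbf{Definition~\ref{Def-single-mode-type}} that all $N\!=\!1$ states are of single-mode type via Ref.~\cite{ReZeBeBe94}, that the class is empty for multi-particle fermions by the Pauli exclusion principle, and nontrivial for multi-particle bosons. Your explicit justifications (first-column-of-a-unitary argument for $N\!=\!1$, $(a_1^\dag)^2=0$ for fermions, $\ket{N,0,\dots,0}$ itself for bosons) are sound and match the paper's intent.
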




We refer to the \textbf{Methods} section for the detailed proof of \textbf{Theorem~\ref{theorem}}. Let us conclude by unfolding a partial result which facilitates the full proof. 

\vspace{0.4cm}\noindent{\textbf{\textsf{Yurke-Stoler test: A useful lemma}}}\vspace{0.2cm}

The key tool in our discussion of non-locality in the interferometric setups is the co-called \textit{Yurke-Stoler test}. It is designed for the special case of two particles ($N\!=\!2$) and two modes ($M\!=\!2$), and asks about Bell non-locality of the following general state 
\begin{eqnarray}\label{Psi-M2-N2}
\ket{\phi}&=&\Big(\,\alpha\ \tfrac{{a_{1}^\dag}^2}{\sqrt{2!}}\,+\,\beta\ a_{1}^\dag\,a_{2}^\dag\,+\,\gamma\ \tfrac{{a_{2}^\dag}^2}{\sqrt{2!}}\,\Big)\ket{0},
\end{eqnarray}
where $|\alpha|^2+|\beta|^2+|\gamma|^2=1$ in the boson case, whilst for fermions we have $|\beta|^2=1$, $\alpha=\gamma=0$. In the following, we show how the seminal idea in Ref.~\cite{YuSt92} can be refined to serve the purpose at hand.

\begin{figure}[t]
\centering
\includegraphics[width=\columnwidth]{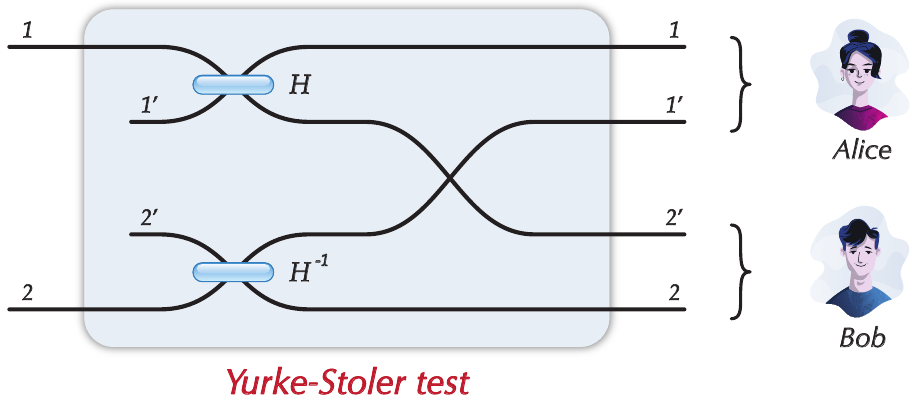}
\caption{\label{Fig-YS}{\bf\textsf{\mbox{Yurke-Stoler test.}}} A two-particle state, Eq.\,(\ref{Psi-M2-N2}), in modes 1 \& 2 is split by two beam splitters (Hadamard gates) into two dual-rail qubits \{\text{1},1'\} and \{2,2'\}. Then the modes 1' and 2' get swapped and Alice \& Bob make the usual Bell test on their (dual-rail) qubits when each one of them gets a single particle.\vspace{-0.2cm}}
\end{figure}

\textit{Yurke-Stoler} test checks for a violation of Bell inequalities in the design depicted in Fig.~\ref{Fig-YS}. It takes the state Eq.~(\ref{Psi-M2-N2}) in two input modes 1 \& 2, that are split into two \textit{dual-rail qubits} \{\text{1},1'\} and \{2,2'\}. Then the modes 1' and 2' get swapped, and the dual-rail qubits are sent out to Alice and Bob, who perform a typical Bell experiment. We note that the choice of settings on Alice and Bob's side is unrestricted since any projective measurement on their dual-rail qubits can be implemented with passive linear optics. The protocol involves \textit{post-selection}, which retains only those experimental runs in which a single particle is detected on Alice and Bob's side (i.e., when the dual-rail qubits are well-defined). Notably, the specifics of post-selection used in the experiment do not compromise the conclusions from the violation of Bell inequalities when the number of particles is conserved, as shown in~\cite{PoHaZu97,BlBoMa21}. It means that the protocol provides a genuine test of non-local correlations in the system.

This scheme was originally proposed by Yurke and Stoler to prove Bell non-locality for two particles coming from independent sources~\cite{YuSt92}. For a recent discussion see~\cite{BlMa19}. Here, we build on this idea taking it as a test of Bell non-locality for a general two-particle state $\ket{\phi}$ in Eq.~(\ref{Psi-M2-N2}), which via the setup in Fig.~\ref{Fig-YS} evolves as follows
\begin{widetext}
\begin{eqnarray}
\ket{\phi}&\xymatrix{\ar[r]^{\atop H,H^{ \text{-1}}}&}&\Big(\,\tfrac{\alpha\,}{2}\tfrac{(a_{1}^\dag\,+\,a_{1'}^\dag)^2}{\sqrt{2!}}\,+\,\tfrac{\beta}{2}\,(a_{1}^\dag+a_{1'}^\dag)\,(a_{2}^\dag\,+\,a_{2'}^\dag)\,+\,\tfrac{\gamma}{2}\tfrac{(a_{2}^\dag\,+\,a_{2'}^\dag)^2}{\sqrt{2!}}\,\Big)\ket{0}
\\
&\xymatrix{\ar[r]^{\atop 1'\leftrightarrow\,2'}&}&\Big(\,\tfrac{\alpha\,}{2}\tfrac{(a_{1}^\dag\,+\,a_{2'}^\dag)^2}{\sqrt{2!}}\,+\,\tfrac{\beta}{2}\,(a_{1}^\dag\,+\,a_{2'}^\dag)\,(a_{2}^\dag+a_{1'}^\dag)\,+\,\tfrac{\gamma}{2}\tfrac{(a_{2}^\dag\,+\,a_{1'}^\dag)^2}{\sqrt{2!}}\,\Big)\ket{0}
\\
&\xrsquigarrow{\text{\!\tiny{\emph{post-select}}\!}}&\Big(\,\tfrac{\alpha}{\sqrt{2!}}\ a_{1}^\dag\,a_{2'}^\dag\,+\,\tfrac{\beta}{2}\ a_{1}^\dag\,a_{2}^\dag
\,\pm\,\tfrac{\beta}{2}\ a_{1'}^\dag\,a_{2'}^\dag
\,+\,\tfrac{\gamma}{\sqrt{2!}}\ a_{1'}^\dag\,a_{2}^\dag\,\Big)\ket{0},
\end{eqnarray}
\end{widetext}
where the "$-$" sign refers to fermions that anti-commute (for which we also have $|\beta|^2=1$, $\alpha=\gamma=0$). Note that in the last line only the terms with a single particle in each dual-rail qubit \{\text{1},1'\} and \{2,2'\} are retained. Clearly, this state is unnormalised due to post-selection, which succeeds with probability \nicefrac{1}{2}. 
As a result, Alice and Bob share a dual-rail encoded two-qubit state in the form
\begin{eqnarray}
\ket{\chi}&\sim&\tfrac{\alpha}{\sqrt{2!}}\ket{\uparrow\uparrow}+\tfrac{\beta}{2}\ket{\uparrow\downarrow}\pm\tfrac{\beta}{2}\ket{\downarrow\uparrow}+\tfrac{\gamma}{\sqrt{2!}}\ket{\downarrow\downarrow},
\end{eqnarray}
where $\ket{\uparrow}\equiv\ket{10}$ and $\ket{\downarrow}\equiv\ket{01}$ is the computational basis for the respective dual-rail qubit \{\text{1},1'\} and \{2,2'\}.

Alice and Bob then conduct the standard Bell test with an arbitrary choice of measurements on either side. It will only reveal non-local correlations for some choice of settings if, and only if, $\ket{\chi}$ is an entangled state~\cite{Gi91}. Conversely, the lack of non-local correlations observed in the Yurke-Stoler test requires that it must be a product state, i.e., we need to have $\beta^2=2\,\alpha\gamma$.\footnote{It is straightforward to see that the state $e\ket{\uparrow\uparrow}+f\ket{\uparrow\downarrow}+g\ket{\downarrow\uparrow}+h\ket{\downarrow\downarrow}$ is a product state if, and only if, $eh=fg$.} 
Note that the latter condition implies that Eq.~(\ref{Psi-M2-N2}) simplifies to the single-mode type form (possible only for bosons)
\begin{eqnarray}
\ket{\phi}&=&
\tfrac{1}{\sqrt{2!}}\left(\sqrt{\alpha}\,a_1^\dag+\sqrt{\gamma}\,a_2^\dag\right)^2\ket{0}\,=\,\frac{{{a'\!}_1^{\,\,\dag}}^{2}}{\sqrt{2!}}\ket{0}.
\end{eqnarray}
This proves a helpful result that can be used to analyse more complex interferometric designs.
\begin{lemma}\label{lemma}\ \\
The state $\ket{\phi}$ in Eq.~(\ref{Psi-M2-N2}) does \myuline{not} manifest non-local correlations in the Yurke-Stoler test if, and only if, it is of a single-mode type, i.e. when we have $\beta^2=2\,\alpha\gamma$.
\end{lemma}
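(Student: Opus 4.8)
The plan is to reduce the question of Bell non-locality in the Yurke--Stoler interferometer to a single statement about the entanglement of the post-selected two-qubit state, and then to translate the resulting coefficient condition back into the single-mode property. First I would track the action of the circuit on $\ket{\phi}$ mode by mode: apply the two Hadamard beam splitters, which amount to the substitution $a_i^\dag\mapsto(a_i^\dag+a_{i'}^\dag)/\sqrt{2}$ expanding modes $1$ and $2$ into the dual-rail pairs $\{1,1'\}$ and $\{2,2'\}$; then implement the $1'\leftrightarrow 2'$ swap; and finally project onto the subspace with exactly one particle in each pair. Expanding the products and discarding the doubly-occupied terms yields the (unnormalised) dual-rail two-qubit state $\ket{\chi}$ with coefficients $(e,f,g,h)=(\alpha/\sqrt{2},\,\beta/2,\,\pm\beta/2,\,\gamma/\sqrt{2})$, exactly as displayed above the Lemma.

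The crux is a clean equivalence between observed non-locality and a condition on these coefficients. Because any projective measurement on a dual-rail qubit is realisable by passive linear optics, Alice and Bob may run an unrestricted CHSH test on $\ket{\chi}$, and by Gisin's theorem~\cite{Gi91} a pure two-qubit state violates some Bell inequality if, and only if, it is entangled. Hence the test fails to reveal non-locality precisely when $\ket{\chi}$ is a product state, which by the elementary criterion $eh=fg$ reads $\alpha\gamma/2=\beta^2/4$, i.e. $\beta^2=2\alpha\gamma$ in the boson case. (For fermions one has $\alpha=\gamma=0$ and $|\beta|=1$, so $eh=0\neq fg$ and $\ket{\chi}$ is maximally entangled, consistent with the non-existence of single-mode-type fermion states.) It then remains to verify that $\beta^2=2\alpha\gamma$ is equivalent to $\ket{\phi}$ being of single-mode type; this is the short algebraic step of checking that the condition is exactly what permits the factorisation $\ket{\phi}=\tfrac{1}{\sqrt{2!}}(\sqrt{\alpha}\,a_1^\dag+\sqrt{\gamma}\,a_2^\dag)^2\ket{0}$, a single effective creation operator acting twice, hence a single mode after the unitary sending $\sqrt{\alpha}\,a_1^\dag+\sqrt{\gamma}\,a_2^\dag$ to a new mode operator. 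Both directions then close: single-mode $\Rightarrow\beta^2=2\alpha\gamma\Rightarrow\ket{\chi}$ product $\Rightarrow$ no violation, and conversely (via the contrapositive of Gisin's theorem) no violation $\Rightarrow\ket{\chi}$ separable $\Rightarrow\beta^2=2\alpha\gamma\Rightarrow$ single-mode.

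The step I expect to demand the most care is not any of these calculations but the justification that the post-selection is legitimate: that certifying a CHSH violation only on the subensemble in which each side receives a single particle genuinely witnesses non-locality of $\ket{\phi}$, rather than being an artefact of discarding runs (a detection-loophole concern). The plan is to lean on the number-conservation arguments of~\cite{PoHaZu97,BlBoMa21}, which guarantee that with the total particle number fixed the post-selected statistics cannot fake a Bell violation, so that the reduction to the entanglement of $\ket{\chi}$ is sound in both directions.
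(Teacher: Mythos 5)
Your proposal is correct and follows essentially the same route as the paper: evolve $\ket{\phi}$ through the Hadamards, the $1'\leftrightarrow 2'$ swap and the single-particle post-selection to obtain the dual-rail two-qubit state, invoke Gisin's theorem to equate Bell violation with entanglement, read off the product condition $eh=fg$ as $\beta^2=2\alpha\gamma$, and identify this with the single-mode factorisation $\tfrac{1}{\sqrt{2!}}(\sqrt{\alpha}\,a_1^\dag+\sqrt{\gamma}\,a_2^\dag)^2\ket{0}$. Your attention to the legitimacy of post-selection under particle-number conservation matches the paper's appeal to the same references, so no gap remains.
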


Note that for fermions, the Yurke-Stoler test is enough to demonstrate Bell non-locality for \textit{every} two-particle ($N\!=\!2$) and two-mode ($M\!=\!2$) state. However, for the boson case, this setup is insufficient to prove non-locality for the single-mode type states. It will turn out that those states are locally simulable for any passive linear optical circuit. See \textbf{Methods} for the generic construction.






\section{\textsf{Discussion}}\vspace{-0.3cm}

The main interest of this work is to explore the boundary between the classical and quantum features of identical particles. We approach this task by clearly distinguishing between the classical and quantum parts of the experimental setups. This allows us to ask the question of whether and when the quantum state introduces a new quality that cannot be explained by classical means. In other words, can the quantum state \textit{itself} be thought of as a genuine non-local resource? For this reason, in the classical part we include only standard optical means of processing the state, consisting of passive linear optics and single-mode detection. Note that if any additional quantum components were allowed, then the answer would refer to both the state \textit{and} the extra quantum resources used in the experiment. However, this is a very {different} question. Therefore, when seeking an unambiguous answer, one should avoid using any additional resources, such as extra quantum particles or ancillary states. In particular, this excludes POVMs or homodyne detection from consideration (if these were included, the conclusion of non-classicality would refer jointly to the state \textit{and} the ancillary resource used in the experiment, see e.g.~\cite{Ha94,DuVe07,FuTaZwWiFu15,DaKaMaMaWoZu21}). In short, we stand on the position that adding an ancilla introduces extra quantum particles being a potential source of non-local effects, which undermines the conclusions regarding the non-locality of the studied state \textit{itself} in such extended experimental procedures.

Our result shows a fundamental difference between the system of particles that evolved from a single mode and those that evolved from separate modes. We coined the simple term \textit{single-mode type} state, which turns out to fully characterise the class of states incapable of exhibiting non-local effects in any passive linear optical experiment. It was shown that, for all those states, there exists a generic local hidden variable model that explains the observed statistics. The constructed model can be viewed as the maximum achievable as regards local simulation of multi-particle quantum states in passive linear optics. On the other hand, we have also shown that for every \textit{non-single-mode type} state, a passive linear optical experiment can be found that demonstrates Bell non-locality. This completes the characterisation of non-local states with a \textit{definite number} of identical particles, which boils down to a few easy-to-check conditions defining single-mode type states. We note that our result considerably extends the discussion in Ref.~\cite{Bl18}, where only the single-particle case was discussed.

The derived classification of non-local resources applies to both bosons and fermions, as illustrated in Fig.~\ref{Fig-Clasiffication}. Interestingly, all multiparticle fermion states provide a non-local resource for passive linear optics, whereas for bosons there is a non-trivial class of single-mode type states that lack this property. This difference can is due to the Pauli exclusion principle, which rules out their single-mode origin. We note that, in the non-local sector, bosons admit a wider range of states which exhibit peculiar properties. As a curious example, take the boson sampling protocol designed for passive linear optics~\cite{AaAr13}. It requires non-local states, but not of the type admitted by the fermion statistics. Let it serve as an illustration of the intricate relationship between the fundamental concept of Bell non-locality and complexity in quantum information processing for identical particles. The latter aspect goes beyond the scope of the present paper; see Refs.~\cite{LoCo18,MoYaFaZiTrAd20} for a discussion of those topics.


 
Finally, two general comments should be made about the physics of the research problem. Firstly, this work is not about the property of entanglement, which is a theoretical concept defined within the mathematical formalism of quantum theory. It is about the property called Bell non-locality, which concerns causal explanations of correlations observed in certain experimental setups that are crucially related to their spatial design. Therefore, the question of physical realisations plays a central role in our discussion (which involves passive linear optics and single-mode detection). This attitude is also motivated by the fact that the definition of entanglement for a system of identical particles is challenging due to the mode description in the second-quantised theory. See Ref.~\cite{BeFlFrMa20} for a recent review of the competing approaches, none of which rely on the causal notion of Bell non-locality discussed in this paper. In this sense, the problem concerns the physical concepts and their interpretation, rather than 
the mathematical aspects of quantum formalism. Secondly, we are interested in systems of identical particles (bosons or fermions) that are closer to physical realisations, since any implementation necessarily involves particles that are deemed fundamentally indistinguishable. Although it may seem like a step backwards compared to the modern information-theoretic accounts, we believe this is a step in the right direction, following the slogan \textit{"information is physical"}. This line of research highlights the fundamental quantum principle of indistinguishability of particles, which is sometimes overlooked despite being a core tenet of quantum mechanics (cf. the quote from C. W. Misner, K. S. Thorne \& J. A. Wheeler in \textit{Gravitation}, 1973).

\newpage

\section{\textsf{Methods}}\vspace{-0.3cm}

It is convenient to rephrase \textbf{Theorem~\ref{theorem}} in the equivalent contrapositive form. Using \textbf{Definition~\ref{non-local-resource}}, it unfolds as follows.
\renewcommand\thetheorem{1'}
\begin{theorem}\label{theorem-prime}\ \\
A state $\ket{\psi}$ is locally simulable in every passive linear optical experiment if, and only if, it is of a single-mode type.
\end{theorem}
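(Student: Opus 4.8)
The plan is to prove the two implications separately: the ``if'' direction (single-mode type $\Rightarrow$ locally simulable) by an explicit hidden-variable construction, and the ``only if'' direction by its contrapositive (not single-mode type $\Rightarrow$ non-local resource), reducing it to the \textbf{Yurke--Stoler} test of \textbf{Lemma~\ref{lemma}}. For the ``if'' part the first observation is that the class of single-mode type states is \emph{closed} under passive linear optics: applying $\mathbb{U}$ to $\ket{\psi}=\tfrac{({a'}_1^{\,\dag})^N}{\sqrt{N!}}\ket{0}$ with ${a'}_1^{\,\dag}=\sum_j c_j a_j^\dag$ merely rotates the coefficient vector $c\mapsto\mathbb{U}c$, again a unit vector, so the output is $\tfrac{(b^\dag)^N}{\sqrt{N!}}\ket{0}$ with $b^\dag=\sum_j d_j a_j^\dag$. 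A short computation then shows that the detection statistics of such a state are exactly \emph{multinomial}, $P(n_1,\dots,n_M)=\tfrac{N!}{n_1!\cdots n_M!}\prod_j|d_j|^{2n_j}$, i.e.\ each of the $N$ particles lands independently in mode $j$ with probability $|d_j|^2$. This is precisely the statistics of $N$ \emph{independent} single particles, each prepared in the one-particle state with amplitudes $d_j$. I would then invoke the single-particle local model of Ref.~\cite{Bl18}: running $N$ independent copies of it --- one hidden variable per particle, each routed locally along the circuit and tallied at the number-resolving detectors --- reproduces the multinomial counts in \emph{any} passive linear optical experiment while respecting locality with respect to the measurement settings (which are themselves passive linear optics absorbed into $\mathbb{U}$). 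Hence every single-mode type state is locally simulable.

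For the ``only if'' part I would argue the contrapositive, that a state which is \emph{not} of single-mode type is a non-local resource. The guiding invariant is the normalised single-particle reduced density matrix $\rho^{(1)}_{ij}\propto\langle\psi|a_i^\dag a_j|\psi\rangle$; one checks that $\mathrm{rank}\,\rho^{(1)}=1$ exactly when all $N$ particles occupy a common mode, i.e.\ precisely for single-mode type states, so a non-single-mode state has $\mathrm{rank}\,\rho^{(1)}\geq 2$. The strategy is to reduce any such state, by passive linear optics followed by post-selection, to a two-mode two-particle state of the form in Eq.~(\ref{Psi-M2-N2}) with $\beta^2\neq 2\,\alpha\gamma$, whereupon \textbf{Lemma~\ref{lemma}} supplies a Bell-violating circuit. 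Concretely, I would rotate two occupied single-particle modes into modes $1$ and $2$ and post-select on detecting the remaining $N-2$ particles in the other modes; the surviving amplitude in modes $1,2$ is a binary quadratic form, and the task is to arrange that its discriminant does not vanish. For fermions this step is automatic: any two-fermion two-mode state is $a_1^\dag a_2^\dag\ket{0}$, which is never single-mode type, so every multi-particle fermion state passes the test.

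The main obstacle is the bosonic reduction: showing that a symmetric state whose defining homogeneous polynomial $P$ in the $a_j^\dag$ is \emph{not} an $N$-th power of a linear form always admits a basis and an $(N-2)$-particle detection pattern after which the residual binary form is non-degenerate. I would phrase this as a statement about the $(N-2)$-order partial derivatives of $P$: one has $P=\ell^N$ if and only if every such derivative is a scalar multiple of $\ell^2$, so when $P$ is not a perfect power these derivatives span a space of quadratic forms of dimension $\geq 2$, and I would show that --- after a suitable passive linear optical change of basis --- this space contains a form with non-vanishing discriminant, i.e.\ a genuinely entangled two-qubit slice corresponding to $\beta^2\neq 2\,\alpha\gamma$. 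Establishing this non-degeneracy uniformly over all $N$ and $M$ is the crux of the argument; once it is in hand, the Yurke--Stoler lemma closes the non-locality direction, and together with the local model of the first part it yields the stated equivalence.
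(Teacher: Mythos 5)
Your ``if'' direction is correct and takes a genuinely different route from the paper. You observe that a single-mode type state yields exactly multinomial detection statistics in any passive linear optical circuit and therefore factorises into $N$ independent single-particle problems, each simulable by the local model of Ref.~\cite{Bl18}. The paper instead builds a self-contained equivariant model (an amplitude $\alpha_i$ and an occupation $k_i$ per mode, with a multinomially redistributing beam-splitter rule) and proves closure of the epistemic states under all gates (\textbf{Lemma~\ref{equivariance}}); the two constructions are essentially the same model, but your reduction to the known $N=1$ result is more economical, provided one notes (as the paper does) that detectors can be deferred to the end so that post-selection is mere classical post-processing of the joint counts.

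The ``only if'' direction, however, has a genuine gap, and you name it yourself: the non-degeneracy of the residual binary quadratic form is asserted, not proved. Two concrete problems. First, post-selecting on a \emph{fixed} detection pattern $(n_3,\dots,n_M)$ in the auxiliary modes gives you access only to the \emph{individual} partial derivatives $\partial_3^{n_3}\!\cdots\partial_M^{n_M}P$ (one specific coefficient extraction per pattern), not to arbitrary elements of the span of the $(N-2)$-order derivatives. This distinction is fatal for NOON-type states $\beta_0\ket{0N}+\beta_N\ket{N0}$: there every individual $(N-2)$-fold detection pattern in the two-particle filter leaves a residue proportional to $(a_1^\dag)^2$ or $(a_2^\dag)^2$ alone (hence degenerate, $\beta^2=2\alpha\gamma$ trivially), even though the state is not a perfect power. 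The paper handles precisely this case with the quantum-erasure interferometer of Fig.~\ref{Fig-YS-QE-M2}, which coherently superposes detection patterns so that both terms survive into the post-selected two-particle state; without some such device your discriminant argument cannot be closed. Second, your one-shot reduction (``rotate two occupied modes into $1$ and $2$ and detect the remaining $N-2$ particles'') silently assumes the general-$M$ case follows from a single well-chosen basis and pattern; the paper instead needs the full recurrence of Eq.~(\ref{Condition-YS-M2}) together with \textbf{Lemma~\ref{psi-M2-lemma}} to pin down the $M=2$ coefficients, and then an induction on the number of modes with a two-stage filtering argument to propagate the conclusion to $M\geqslant 3$. Your algebraic reformulation (catalecticant rank, rank of $\rho^{(1)}$) is a correct characterisation of single-mode type states and could plausibly be turned into a proof, but as it stands the crux --- realising a non-degenerate two-particle slice by a \emph{physical} passive-linear-optical filtering, uniformly in $N$ and $M$ --- is exactly the part that is missing.
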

\noindent This is equivalent to the following two assertions.
\begin{claim}[Necessity]\label{claim1}
A state which does \myuline{not} manifest non-local correlations in any passive linear optical experiment is necessarily of the single-mode type. 
\end{claim}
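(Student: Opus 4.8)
The plan is to prove the contrapositive: every state that is \emph{not} of the single-mode type can be funneled, using only passive linear optics and post-selection, into a two-particle two-mode state of the form in Eq.~(\ref{Psi-M2-N2}) that is itself \emph{not} of the single-mode type, so that \textbf{Lemma~\ref{lemma}} (the Yurke-Stoler test) immediately exhibits a Bell violation. To organize the reduction I would pass to the polynomial picture: identify the boson state in Eq.~(\ref{N-particle-state-adag}) with the degree-$N$ homogeneous form $P(x_1,\dots,x_M)$ obtained by $a_i^\dagger\mapsto x_i$. Under this dictionary a passive linear unitary $\mathbb{U}$ acts as a linear change of variables, detecting one particle in a mode corresponds to a first-order partial derivative, and \textbf{Definition~\ref{Def-single-mode-type}} says exactly that a state is of the single-mode type iff $P=\ell^{N}$ is the $N$-th power of a linear form $\ell$. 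Restricting to two modes after removing the rest corresponds to applying an order-$(N-2)$ monomial differential operator and setting the remaining variables to zero, which leaves a binary quadratic; this quadratic is \emph{not} of the single-mode type precisely when it fails to be a perfect square, i.e. when $\beta^2\neq 2\alpha\gamma$ in the notation of Eq.~(\ref{Psi-M2-N2}).

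I would carry out the reduction by induction on $N$, peeling off one particle at a time. The base case $N=2$ is easy: a non-single-mode two-particle state has a single-particle density matrix of rank $\geq 2$, so a Takagi (natural-orbital) rotation brings it to a form populating at least two orthogonal modes; projecting the complementary modes onto the vacuum yields a genuine non-perfect-square two-mode state, and \textbf{Lemma~\ref{lemma}} applies. For the inductive step ($N\geq 3$) the key claim is: if $P\neq\ell^{N}$, then in \emph{some} unitary basis at least one directional derivative $\partial_{y}P$ fails to be an $(N-1)$-th power of a linear form; equivalently, detecting one particle in mode $y$ leaves an $(N-1)$-particle state that is again not of the single-mode type, and one recurses. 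The crux is the supporting algebraic fact, which I would prove in two elementary steps: (i) if the first partials $\partial_iP$ span only a one-dimensional space, then, using the symmetry $\partial_i\partial_jP=\partial_j\partial_iP$ together with Euler's identity, $P$ collapses to $\ell^{N}$; and (ii) the variety of $(N-1)$-th powers of linear forms contains \emph{no} two-dimensional linear subspace, since a pencil $s\,\ell_1^{k}+t\,\ell_2^{k}$ with $\ell_1\not\parallel\ell_2$ and $k=N-1\geq 2$ is never uniformly a perfect $k$-th power (already $u^{k}+v^{k}$ fails, as its binomial middle terms cannot vanish). Hence $P\neq\ell^{N}$ forces the first partials to span dimension $\geq 2$ and, by (ii), some directional derivative to lie off the power variety, which is exactly the needed claim.

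For fermions the reduction is immediate and no algebraic fact is required: since single-mode-type states are empty for $N\geq 2$, I only need to reach a nonzero two-fermion two-mode state. Any nonzero coefficient $\psi_{n_1\dots n_M}$ of the state has at least two occupied modes $a,b$; projecting the complementary modes onto the remaining occupied pattern leaves the state $\propto a_a^\dagger a_b^\dagger\ket{0}$, which has $\alpha=\gamma=0$, $\beta\neq 0$, and is therefore never a perfect square. \textbf{Lemma~\ref{lemma}} then yields a Bell violation, covering all multi-particle fermion states at once.

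The main obstacle, and the step I would spend most care on, is the boson reduction, concretely the guarantee that non-single-modeness \emph{survives} an appropriately chosen single-particle detection, i.e.\ the algebraic fact above. The essential and somewhat counter-intuitive ingredient is that the freedom to rotate by an arbitrary $\mathbb{U}$ before each detection is \emph{indispensable}: in a fixed basis the property can be masked (for instance every first partial of $x_1^N+x_2^N$ is a pure power, yet the state is not of the single-mode type), whereas after a generic rotation a directional derivative becomes a sum of distinct powers and is detected as non-single-mode. I would also check the degenerate cases (vanishing or proportional linear forms, and detection outcomes of zero amplitude) and confirm that each rotation-plus-detection step is a legitimate passive linear optical operation whose post-selection is harmless under particle-number conservation, as guaranteed by the references cited for the Yurke-Stoler test.
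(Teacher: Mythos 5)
Your fermion reduction and your $N=2$ base case are both correct, and your algebraic core (a form whose first partials span a line is a power of a linear form; the variety of $k$-th powers of linear forms contains no two-dimensional linear subspace) is sound. But the inductive step on $N$ contains a genuine gap at the point where you pass from algebra to optics: the assertion that detecting one particle in mode $y$ "corresponds to a first-order partial derivative" is not correct for a terminating number-resolving detector. Post-selecting on one count in mode $y$ leaves the remaining $M-1$ modes in the state whose polynomial is the coefficient of $y^1$ in $P$, i.e. $(\partial_y P)\big|_{y=0}$, not $\partial_y P$ itself. Your span-of-partials argument controls the directional derivatives $\partial_y P$ as degree-$(N-1)$ forms in all $M$ variables; it says nothing about their restrictions to $y=0$. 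The failure is not cosmetic: for $M=2$ the post-detection state lives in a single remaining mode and is therefore automatically of the single-mode type, so no rotation followed by a terminating single-particle detection can ever certify non-locality of, say, the NOON state $x_1^N+x_2^N$ with $N\geqslant 3$ --- which is exactly the hard case. Your own remark that pre-detection rotations are "indispensable" does not rescue this; the obstruction is the loss of the detected mode, not the choice of basis.

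The step can probably be repaired, but it needs a different gadget from the one you describe: couple mode $y$ to a fresh vacuum ancilla mode $z$ by a beam splitter of transmittivity $\cos\theta$ and herald on a single count in $z$; the conditional state is then proportional to $(\partial_y P)(x_1,\dots,\cos\theta\,y,\dots)$, which keeps all $M$ modes alive and is a power of a linear form iff $\partial_y P$ is (the rescaling $y\mapsto\cos\theta\,y$ is an invertible linear substitution). With that replacement your induction on $N$ closes. For comparison, the paper avoids the issue by inducting on the number of modes $M$ rather than on $N$: the two-mode case is settled by a two-particle filter (splitting each of modes 1 and 2 and heralding $s$ and $N-s-2$ counts in the ancillas, Fig.~\ref{Fig-YS-M2}), which yields the recurrence Eq.~(\ref{Condition-YS-M2}), supplemented by a quantum-erasure setup (Fig.~\ref{Fig-YS-QE-M2}) to dispose of the NOON states; the inductive step then strips modes one at a time by conditioning on detector outcomes. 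Your route, once the extraction gadget is fixed, would be more uniform (no special treatment of NOON states and no recurrence to solve), but as written the key physical step does not implement the operation your algebra requires.
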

\begin{claim}[Sufficiency]\label{claim2}
All single-mode type states are locally simulable in arbitrary passive linear optical experiments.
\end{claim}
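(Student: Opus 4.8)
The plan is to prove \textbf{Claim~\ref{claim2}} by collapsing the entire many-body experiment into $N$ \emph{independent} single-particle experiments, and then importing the generic single-particle local model of Ref.~\cite{Bl18}. The starting point is that the single-mode type property is preserved by passive linear optics. By \textbf{Definition~\ref{Def-single-mode-type}} such a state is $\ket\psi=\tfrac{1}{\sqrt{N!}}\,(b^\dag)^N\ket0$ with a single operator $b^\dag=\sum_j c_j\,a_j^\dag$, i.e. $N$ bosons all occupying one single-particle wavefunction $\ket\phi=\sum_j c_j\ket j$. The optical part of any experiment acts on the single-particle space $\mathbb{C}^M$ as a unitary (Eq.~(\ref{U-a-dag})), so it merely transports $\ket\phi\to\ket{\phi_{\mathrm{out}}}$ and sends the whole state to $\tfrac{1}{\sqrt{N!}}\,(b_{\mathrm{out}}^\dag)^N\ket0$ — again $N$ bosons in a single, circuit-evolved mode. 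This remains true for every configuration of the optics and for every local measurement setting at the stations $A,B,C,\dots$, since those settings are themselves local single-particle unitaries absorbed into the overall transformation.

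The key step I would establish next is a factorisation of the detection statistics. For $N$ bosons sharing one output wavefunction $\ket{\phi_{\mathrm{out}}}=\sum_k d_k\ket k$, a short expansion of $(\sum_k d_k\,b_k^\dag)^N\ket0$ in the number basis gives the number-resolving statistics
\[
P(n_1,\dots,n_L)\;=\;\frac{N!}{n_1!\cdots n_L!}\prod_k|d_k|^{2 n_k},\qquad\textstyle\sum_k n_k=N,
\]
with single-particle probabilities $p_k=|d_k|^2=|\langle k|\phi_{\mathrm{out}}\rangle|^2$. The essential content is that \emph{no} cross-particle interference survives when all particles share one mode, so this is exactly the distribution of $N$ independent classical particles each sampled from $\{p_k\}$. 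Hence the many-body experiment carries no correlations beyond those of $N$ independent single-particle experiments run through the same circuit — and this holds for every setting choice, since each choice only reshapes $\ket{\phi_{\mathrm{out}}}$.

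With the factorisation in hand, the local model is assembled by composition. For a single particle in $\ket\phi$, Ref.~\cite{Bl18} supplies a generic local hidden-variable model reproducing the statistics of \emph{any} passive linear optical experiment while respecting the causal pattern of the circuit (variables propagating along paths). I would take $N$ independent copies with independent hidden variables $\lambda_1,\dots,\lambda_N$ and product weight, setting $\lambda=(\lambda_1,\dots,\lambda_N)$. Each copy accounts for one particle's local response to the setting at whichever station detects it; because the copies are independent and the multinomial identity shows the true statistics are precisely the product of single-particle statistics, the composite reproduces the full joint distribution $P(a,b,\dots\mid x,y,\dots)$ at $A,B,C,\dots$. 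Locality is inherited copy by copy, each particle's contribution to a station's outcome depending only on that station's setting. Post-selection on a fixed per-station particle count is admissible because the total number is conserved and therefore cannot manufacture spurious non-locality~\cite{PoHaZu97,BlBoMa21}; one simply conditions the product model on the corresponding count pattern.

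The main obstacle is the clean hand-off to the single-particle model and making the composition a bona fide local model for the full apparatus. Two points need care: (i) verifying that the model of Ref.~\cite{Bl18} is genuinely \emph{setting-independent}, i.e. one fixed distribution $\rho(\lambda_i)$ with responses depending locally on the announced measurement, so that the $N$-fold product is well defined prior to the choice of settings; and (ii) checking that conditioning the product model on a particle-count post-selection stays local and still matches the renormalised multinomial statistics, which is exactly where the number-conservation argument of Refs.~\cite{PoHaZu97,BlBoMa21} must be invoked. Once these are secured, the multinomial factorisation does the essential work, and the resulting construction is the strongest possible local simulation of multi-particle passive linear optics, precisely matching the single-mode type boundary identified by \textbf{Lemma~\ref{lemma}}.
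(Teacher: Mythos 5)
Your proposal is correct, and it rests on the same mathematical fact that powers the paper's proof — namely that a single-mode type state, after any passive linear optical evolution, yields exactly the multinomial statistics of $N$ independent particles each sampled from the single-particle distribution $|d_k|^2$. Where you differ is in the architecture: you argue by \emph{reduction}, taking the $N$-fold independent product of the generic single-particle model of Ref.~\cite{Bl18}, whereas the paper builds a single self-contained model on the ontic space $(\mathbb{C}\times\mathbb{N})^M$ — one shared amplitude field $\vec{\alpha}$ plus occupation numbers $\vec{k}$ that are reshuffled multinomially at each beam splitter, Eq.~(\ref{def-beam-splitter}) — and proves that the class of epistemic states in Eq.~(\ref{epistemic-state}) is closed under every gate (\textbf{Lemma~\ref{equivariance}}). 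The two models are in fact the same object up to de-labelling the particles, but the paper's packaging buys a gate-by-gate equivariance statement that automatically handles arbitrary circuit compositions, intermediate detectors, and conditioning (any post-selection is just conditioning on a joint distribution that is reproduced exactly, so your worry (ii) about invoking the number-conservation arguments of Refs.~\cite{PoHaZu97,BlBoMa21} is unnecessary for this direction — those are needed only to certify that post-selection does not invalidate a claimed \emph{violation}). Your reduction is more economical conceptually but leans on Ref.~\cite{Bl18} for the locality of each copy and leaves implicit the verification that the composed, labelled model respects the circuit's causal pattern at every internal gate; the paper's explicit construction makes that verification the content of the proof.
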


We note that in Ref.~\cite{Bl18}, it was demonstrated that the single-particle case ($N\!=\!1$) holds true (in that case all states are of the single-mode type~\cite{ReZeBeBe94}). See also the proof of \textbf{Claim~\ref{claim2}} below, where a generic local model for all single-mode type states is given, which includes the single-particle case.

In the following, we will focus on the multi-particle case $N\!\geqslant\!2$. We first prove the easier fermion case and then proceed to sketch the proof for bosons.

\begin{center}\myuline{\textbf{\textsf{Fermion case}}}\end{center}
\vspace{-0.1cm}

There are no multi-particle states of the single-mode type for fermions, due to the Pauli exclusion principle. 
It is thus sufficient to indicate, for each fermion state, an experiment that shows non-local correlations. 

We consider a state in Eq.~(\ref{N-particle-state-psi}) with all $n_i=0,1$. Since 
$N\!\geqslant\!2$, there must exist a pair of indices $k,l$ such that 
$\psi_{n_1\,...\,n_M}\neq0$ for $n_k,n_l\!=\!1$ and some choice of the remaining $n_j$'s. Without loss of generality, we take $k\!=\!1,l\!=\!2$ and denote this particular choice of the remaining indices $\tilde{n}_j$ for $j\geqslant3$. That is, we can assume that we have
\begin{eqnarray}\label{alpha-neq-0}
\psi_{11\,\tilde{n}_3...\,\tilde{n}_M}\neq0\,.
\end{eqnarray}

Now, consider the experimental procedure depicted in Fig.~\ref{Fig-YS-Fermions}. This is an event-ready scheme in which the state $\ket{11}$ in modes 1 \& 2 is heralded by the joint detection of $\tilde{n}_j$ particles in the respective modes $j=3,...\,,M$. The condition Eq.\,(\ref{alpha-neq-0}) guarantees that this happens sometimes. Importantly, for fermions, there is only a single term with occupancy $\tilde{n}_3...\,\tilde{n}_M$ in the expression in Eq.~(\ref{N-particle-state-psi}). The experiment ends with the Yurke-Stoler test carried out on such a prepared state in modes 1 \& 2, which demonstrates non-local correlations. See Fig.~\ref{Fig-YS} and \textbf{Lemma~\ref{lemma}}.

\begin{figure}[t]
\centering
\includegraphics[width=\columnwidth]{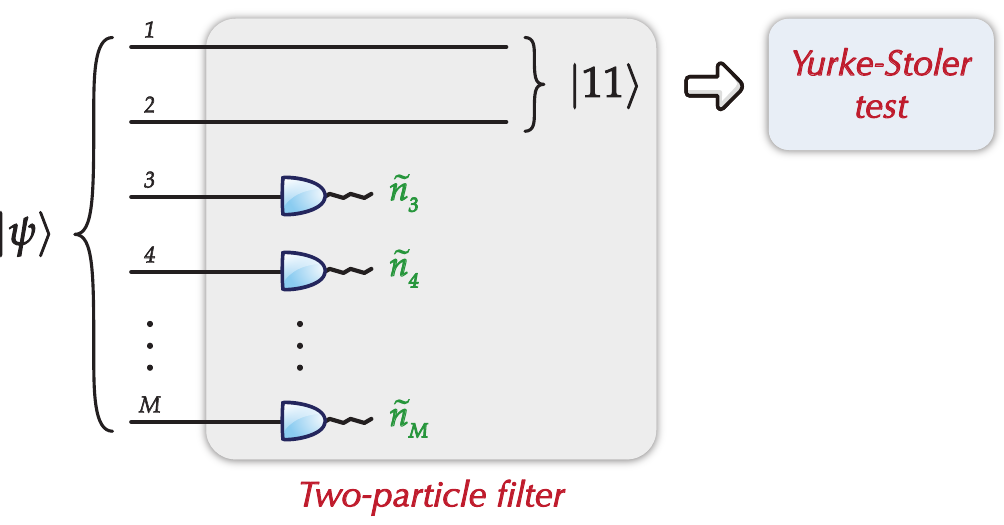}
\caption{\label{Fig-YS-Fermions}{\bf\textsf{\mbox{Filtering out two-particle state in the  fermion case.}}}\\ Conditioning on the detection of $\tilde{n}_j$ particles in the array of detectors $j=3,...\,,M$ prepares the event-ready state $\ket{11}$ in modes 1 \& 2. Then the original Yurke-Stoler test can be made.\vspace{-0.2cm}}
\end{figure}


\begin{center}\myuline{\textbf{\textsf{Boson case}}}
\end{center}
\vspace{-0.1cm}

The boson statistics admits a wider variety of multi-particle states with more subtle correlation effects. This makes the above simple reasoning used for fermions not applicable in the boson case. In the following, we explain the main steps involved in the proof of \textbf{Claim~\ref{claim1}}~and~\textbf{\ref{claim2}}. See \textbf{Supplementary Information} for the details. 

It is crucial to observe that the coefficients in Eq.~(\ref{N-particle-state-psi}), for the single-mode type state can always be expressed in the product form\footnote{We use the abbreviated notation for the \textit{multinomial coefficient}\\$\left(\!{N\atop \vec{n}}\!\right)=\tfrac{N!}{n_1!\,n_2!\,...\,n_M!}$, and by convention $\left(\!{N\atop \vec{n}}\!\right)=0$ for $|\vec{n}|:=\sum_i n_i\neq N$.}
\begin{eqnarray}\label{single-mode-coeff-product}
\psi_{n_1\,...\,n_M}&=&\binom{N}{\vec{n}}^{\!\nicefrac{1}{2}}\,\prod_{i=1}^M\,U_i^{n_i},
\end{eqnarray}
for some complex $U_1,...\,,U_M$ that normalise to one, i.e. ${\sum}_{i=1}^M\,|U_i|^2=1$. This is a simple consequence of the multinomial expansion applied to \textbf{Definition~\ref{Def-single-mode-type}} and Eq.~(\ref{U-a-dag}).
\vspace{0.1cm}

\myuline{\textbf{\textit{Sketch of proof of \textbf{Claim~\ref{claim1}}}}}\ \vspace{0.2cm}

The idea of the proof is to consider examples of passive linear optical experiments and require that they \textit{do not} exhibit non-local correlations. This will imply certain constraints on the coefficients in Eq.~(\ref{N-particle-state-psi}), necessitating the form of Eq.~(\ref{single-mode-coeff-product}). The key tool in the analysis is the Yurke-Stoler test and \textbf{Lemma~\ref{lemma}} discussed in the main text. It proves the case for two particles and two modes ($N\!=\!2, M\!=\!2$). In the following, this result will be used to derive constraints in certain experimental setups.
\vspace{0.2cm}

\textsf{\textit{$\bullet$ \myuline{Many particles \& two modes ($N\!\geqslant\!2, M\!=\!2$)}}}\vspace{0.1cm}

A general two-mode state with arbitrary number of particles $N$ can be written in form  
\begin{eqnarray}\label{psi-M2}
\ket{\phi}&=&\sum_{n=0}^N\beta_n\,\tfrac{{a_1^\dag}^n}{\sqrt{n!}}\tfrac{{a_2^\dag}^{N-n}}{\sqrt{(N-n)!}}\ket{0}.
\end{eqnarray}

\begin{figure}[t]
\centering
\includegraphics[width=\columnwidth]{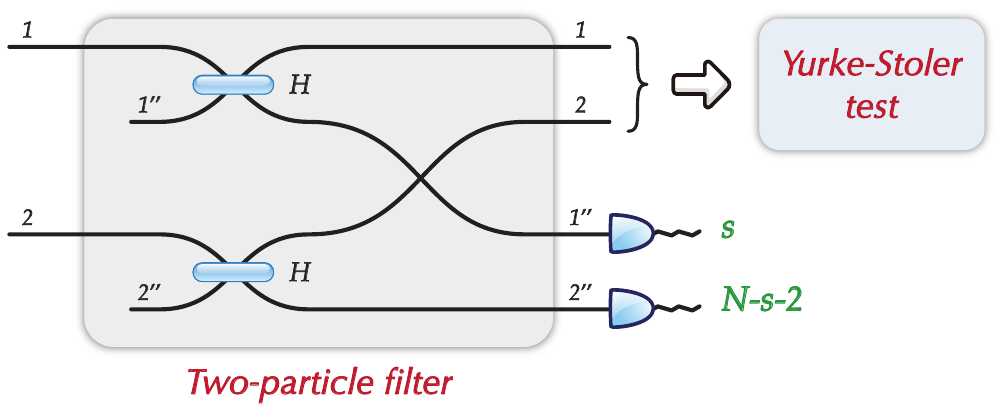}
\caption{\label{Fig-YS-M2}{\bf\textsf{\mbox{Yurke-Stoler type experiment with two-particle filter.}}}\\ In the initial phase $N-2$ particles are pulled out from the system by conditioning on $s$ and $N-s-2$ particle detections in the auxiliary modes 1'' \& 2''. For each $s=0,1,...\,,N-2$\,, this guarantees event-ready preparation of some two-particle state in modes 1 \& 2 which then undergo the Yurke-Stoler test.\vspace{-0.2cm}}
\end{figure}

Let us start by considering a \textit{two-particle filter}, shown in Fig.~\ref{Fig-YS-M2}, which is designed to take an $N$ particle state in the input and separate out a state with only two particles in the output. It splits each input mode 1 \& 2 into \{1,1''\} and \{2,2''\} respectively. Then, the number of particles is measured in the auxiliary modes, and only the cases with $s$ particles in mode 1'' and $N-s-2$ particles in mode 2'' are retained. Such a defined post-selection, for each $s=0,1,...\,,N-2$, warrants that two particles in modes 1 \& 2 remain. This filtering prepares some \textit{event-ready} state on which the Yurke-Stoler to be performed. If no non-local correlations are to be observed with such a prepared state, then certain conditions must be met as explained in \textbf{Lemma~\ref{lemma}}. Calculating explicitly, we get the following set of constraints
\begin{eqnarray}\label{Condition-YS-M2}
\beta_{s+1}^2&=&\beta_{s}\cdot\beta_{s+2}\cdot\sqrt{\tfrac{s+2}{s+1}}\,{\sqrt{\tfrac{N-s}{N-s-1}}}\,,
\end{eqnarray}
for each $s=0,1,...\,,N-2$.

As it turns out, those conditions suffice to analyse all states except for the so-called NOON states, which have the form
\begin{eqnarray}\label{psi-M2-NOON}
\ket{\phi}_{\!\scriptscriptstyle{NOON}}&=&\Big(\,\beta_0\,\tfrac{{a_2^\dag}^{\scriptscriptstyle{N}}}{\sqrt{N!}}+\beta_N\,\tfrac{{a_1^\dag}^{\scriptscriptstyle{N}}}{\sqrt{N!}}\,\Big)\ket{0}\,,
\end{eqnarray}
i.e., when $\beta_1=...=\beta_{N-1}=0$ in Eq.~(\ref{psi-M2}). In this case, we need to modify the optical setup by adding \textit{quantum erasure}, as shown in Fig.~\ref{Fig-YS-QE-M2}. Here, the information about the number 
of particles in the respective modes 1'' and 2'' gets erased when the Hadamard transformation precedes the detection. This is a straightforward generalisation of the erasure of the which-path information for a single photon~\cite{ScEnWa91}. For our purpose, it is sufficient to restrict the attention to the post-selected system with $N-2$ particles in mode 1'' and $0$ particles in mode 2''. Then the resulting \textit{event-ready} state in modes 1 and 2 is subject to the Yurke-Stoler test. Explicit calculation shows that the procedure does not reveal non-local correlations only if $\beta_0=0$ or $\beta_N=0$, which follows from \textbf{Lemma~\ref{lemma}}. That is, we have
\begin{eqnarray}\label{Condition-YS-M2-NOON}
\!\!\!\!\beta_1=...=\beta_{N-1}=0&\ \Rightarrow\ &\beta_0=0\ \ \text{or}\ \ \beta_N=0\,, 
\end{eqnarray}
meaning that every proper NOON state (for $\beta_0,\beta_N\neq0$) is Bell non-local.

We have thus obtained enough constraints to provide the following technical result. 
\begin{lemma}\label{psi-M2-lemma}\ \\
The resolution of conditions in Eq.~(\ref{Condition-YS-M2}) and (\ref{Condition-YS-M2-NOON}) imply that $\beta_n=\big(\!\begin{smallmatrix}N\\n\end{smallmatrix}\!\big)^{\!\nicefrac{1}{2}}\,U_1^n\,U_2^{N-n}$ for some $|U_1|^2+|U_2|^2=1$.
\end{lemma}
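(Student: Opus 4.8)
The plan is to strip away the binomial weights and reduce both hypotheses to a transparent recurrence, then read off the geometric structure. First I would set $c_n := \beta_n\,\binom{N}{n}^{-\nicefrac{1}{2}}$, so that the target form $\beta_n=\binom{N}{n}^{\nicefrac{1}{2}}U_1^nU_2^{N-n}$ becomes the statement that $(c_n)$ is a geometric sequence, $c_n=U_1^nU_2^{N-n}$. Rewriting Eq.~(\ref{Condition-YS-M2}) in terms of the $c_n$, the one computational step is to evaluate the ratio $\binom{N}{s}^{\nicefrac{1}{2}}\binom{N}{s+2}^{\nicefrac{1}{2}}\big/\binom{N}{s+1}=\sqrt{\tfrac{s+1}{s+2}}\,\sqrt{\tfrac{N-s-1}{N-s}}$, which is precisely the reciprocal of the square-root factor appearing on the right-hand side of Eq.~(\ref{Condition-YS-M2}). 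I therefore expect Eq.~(\ref{Condition-YS-M2}) to collapse to the clean quadratic recurrence $c_{s+1}^2=c_s\,c_{s+2}$ for every $s=0,\dots,N-2$.

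From here the argument becomes structural. If no $c_n$ vanishes, then $c_{s+1}^2=c_s c_{s+2}$ is equivalent to the equality of consecutive ratios $c_{s+1}/c_s=c_{s+2}/c_{s+1}$, so this common ratio is a constant $r$ and $c_n=c_0\,r^n$; picking any $N$-th root $U_2$ of $c_0$ and setting $U_1=rU_2$ yields $c_n=U_1^nU_2^{N-n}$. The only way the sequence can fail to be geometric is through vanishing entries, and here I would observe that if any interior coefficient $c_m\neq0$ (with $1\le m\le N-1$) is nonzero, then $c_m^2=c_{m-1}c_{m+1}$ forces both neighbours to be nonzero; propagating this up and down the chain via the recurrence forces every $c_n\neq0$. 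Hence the dichotomy: the sequence is either fully geometric, or all of its interior coefficients vanish.

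The remaining degenerate case, $c_1=\dots=c_{N-1}=0$, is exactly the NOON configuration, and this is where the second hypothesis Eq.~(\ref{Condition-YS-M2-NOON}) supplies the missing input: it forces $\beta_0=0$ or $\beta_N=0$, so at most one endpoint survives. Each surviving possibility matches the target with a degenerate amplitude ($U_1=0$ when only $\beta_0\neq0$, and $U_2=0$ when only $\beta_N\neq0$), while the doubly-vanishing case is ruled out by normalisation. I would close by fixing the magnitudes: since $\sum_n\binom{N}{n}|U_1|^{2n}|U_2|^{2(N-n)}=(|U_1|^2+|U_2|^2)^N$, the constraint $\sum_n|\beta_n|^2=1$ gives $(|U_1|^2+|U_2|^2)^N=1$ and hence $|U_1|^2+|U_2|^2=1$. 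The main obstacle is not the algebra of the recurrence, which is routine once the binomial ratio is simplified, but the careful bookkeeping of the zero patterns: one must confirm that $c_{s+1}^2=c_s c_{s+2}$ admits no support beyond the geometric and NOON shapes, so that Eq.~(\ref{Condition-YS-M2-NOON}) is exactly the extra ingredient needed to close the single gap that Eq.~(\ref{Condition-YS-M2}) leaves open.
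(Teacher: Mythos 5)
Your proof is correct and follows essentially the same route as the paper's: the same dichotomy on vanishing coefficients (any zero forces all interior $\beta_n$ to vanish), the same use of Eq.~(\ref{Condition-YS-M2-NOON}) to dispose of the NOON configuration, and the same resolution of the recurrence in the generic all-nonzero case, with normalisation fixing $|U_1|^2+|U_2|^2=1$. Your substitution $c_n=\beta_n\binom{N}{n}^{-\nicefrac{1}{2}}$, which collapses Eq.~(\ref{Condition-YS-M2}) to the geometric-sequence condition $c_{s+1}^2=c_s\,c_{s+2}$, is a clean streamlining of the paper's direct verification that the ansatz satisfies the recurrence, but the underlying argument is the same.
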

By comparing with Eq.~(\ref{single-mode-coeff-product}), we conclude that any state $\ket{\phi}$ in Eq.~(\ref{Psi-M2-N2}) which does \myuline{not} show non-local correlations in the experimental setups in Fig.~\ref{Fig-YS-M2} and \ref{Fig-YS-QE-M2} is necessarily of a single-mode type. This proves \textbf{Claim~\ref{claim1}} for the case $N\!\geqslant\!2$ and $M\!=\!2$.
\begin{figure}[t]
\centering
\includegraphics[width=\columnwidth]{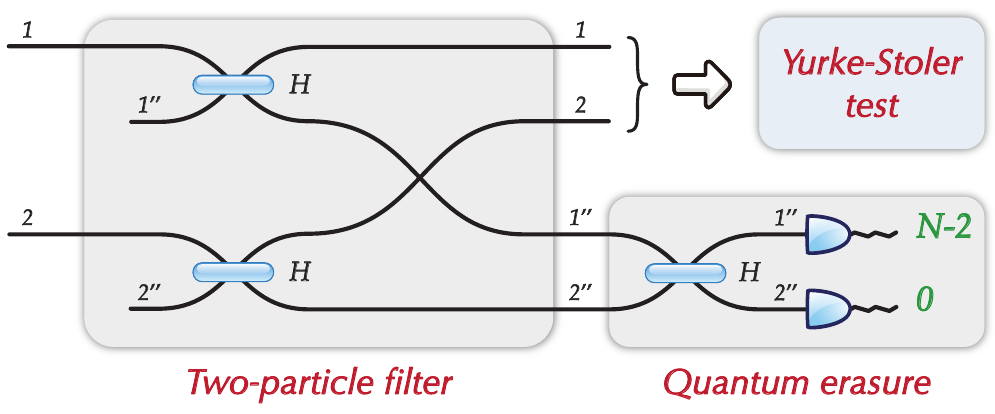}
\caption{\label{Fig-YS-QE-M2}{\bf\textsf{\mbox{Yurke-Stoler type experiment with quantum erasure.}}}\\ Modification of the setup in Fig.~\ref{Fig-YS-M2} which erases the information about the number of particles in modes 1'' \& 2''.\vspace{-0.2cm}}
\end{figure}
\vspace{0.2cm}

\textsf{\textit{$\bullet$ \myuline{Many particles \& many modes ($N\!\geqslant\!2, M\!\geqslant\!2$)}}}\vspace{0.1cm}

This is the most general case covered by \textbf{Claim~\ref{claim2}}. To prove this result, we proceed by induction on the number of modes $M\!=\!2,3,...$ (for any number of particles $N\!\geqslant\!2$). That is, we consider the following series of statements:\vspace{0.2cm}

\textit{1.} \textbf{Claim~\ref{claim2}} holds for $M\!=\!2$ (and any $N\!\geqslant\!2$).

\textit{2.} \textbf{Claim~\ref{claim2}} holds for $M\!=\!3$ (and any $N\!\geqslant\!2$).

\textit{3.} \textbf{Claim~\ref{claim2}} holds for $M\!=\!4$ (and any $N\!\geqslant\!2$).

...\quad...\quad...\quad...\quad...\quad...\quad...\quad...\quad...\quad...\vspace{0.2cm}

By mathematical induction, it is sufficient to justify:\vspace{-0.1cm}

\begin{itemize}
\item[\textit{a)}]{\textit{Base case ($M\!=\!2$):}\\\textbf{Claim~\ref{claim2}} holds for $M\!=\!2$ (and any $N\!\geqslant\!2$).\vspace{-0.1cm}}
\item[\textit{b)}]{\textit{Inductive step ($M-1\!\leadsto\!M$):}\\If claim \textbf{Claim~\ref{claim2}} holds for $M-1$ (and any $N\!\geqslant\!2$), then it also holds for $M$ (and any $N\!\geqslant\!2$), for $M\geqslant3$.\vspace{-0.1cm}}
\end{itemize}

Note that the \textit{base case} was proved above; see \textsf{\textit{Many particles \& two modes}} ($N\!\geqslant\!2, M\!=\!2$). For the proof of the inductive step, we refer to \textbf{Supplementary Information}.
\vspace{0.1cm}

\myuline{\textbf{\textit{Sketch of proof of \textbf{Claim~\ref{claim2}}}}}\ \vspace{0.2cm}\\\indent
We aim to construct a generic local hidden variable model that is capable of simulating any passive linear optical experiment with the single-mode type states. This will require defining the \textit{ontology} of what propagates in the circuit and how the \textit{evolution} is implemented by the gates. Crucially, the propagation of the variables in the model must follow the local pattern of interconnections in a given experimental design.

Without loss of generality, we may consider two-mode gates ({beam splitters}) and single-mode gates ({phase shifters} and {detectors}). These basic elements wired into a complex circuit construct any passive linear optical experiment~\cite{ReZeBeBe94}.
Let us briefly recall how those ecomponents are described in quantum theory. A {beam splitter} is associated a the 2x2 unitary transformation $\mathbb{V}$ which acts on the intersecting paths $s$ and $t$ as follows
\begin{eqnarray}\label{BS}
\binom{a^\dag_s}{a^\dag_t}&\xymatrix{\ar[r]^{\atop } &}& \binom{{a'\!}_s^{\,\,\dag}}{{a'\!}_t^{\,\,\dag}}\,=\,\mathbb{V}\,\binom{a^\dag_s}{a^\dag_t}\,.
\end{eqnarray}
A {phase shifter} in path $s$ introduces a phase $e^{i\varphi}$ where it is located
\begin{eqnarray}\label{PS}
a^\dag_s&\xymatrix{\ar[r]^{\atop } &}& {a'\!}_s^{\,\,\dag}\,=\,e^{i\varphi}\,a^\dag_s\,.
\end{eqnarray}
Finally, there are single-mode detectors which terminate evolution in a given path and provide classical information about the number of particles in that path. In our discussion of passive linear optical setups, their action can be effectively pulled out to the end of the experiment since all paths are eventually measured. Accordingly, an array of detectors that measures the state in Eq.~(\ref{N-particle-state-psi}) will register $n_1,n_2,...\,,n_M$ particles in the respective modes with the following statistics
\begin{eqnarray}\label{quantum-statistics}
\ket{\psi}&\Longrightarrow&\text{Prob}_{\ket{\psi}}(n_1,...\,,n_M)\,=\,|\psi_{n_1\,...\,n_M}|^2\,.
\end{eqnarray}
This completes the list of basic elements of passive linear optics and their quantum-mechanical description. Notably, for a general multi-particle state, those rules lead to non-local correlations. In the following, we construct a generic local hidden variable model for the restricted class of single-mode type states.
\vspace{0.2cm}

\textsf{\textit{$\bullet$ \myuline{Construction of a generic local model}}}\vspace{0.1cm}

We will crucially exploit the property of single-mode type states, which allow for a compact representation
\begin{eqnarray}\label{psi-alpha}
\ket{\psi}&\stackrel{\text{1:1}}{\xlrsquigarrow}&\vec{\alpha}_\psi\, .
\end{eqnarray}
It is defined through Eq.~(\ref{single-mode-coeff-product}) where $\vec{\alpha}_\psi\equiv(U_1,...\,,U_M)$. We note that this property is specific to the single-mode type states since, in general, the coefficients $\psi_{n_1\,...\,n_M}$ in Eq.~(\ref{N-particle-state-psi}) cannot be compactly encoded in only $M$ numbers.
\begin{lemma}\label{lemma-psi-alpha}\ \\
For each single-mode type state $\ket{\psi}$, the associated vector $\vec{\alpha}_\psi$ is \myuline{uniquely} defined (up to the overall phase). 
\end{lemma}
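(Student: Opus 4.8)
The plan is to show that the data $\vec{\alpha}_\psi=(U_1,\dots,U_M)$ entering Eq.~(\ref{single-mode-coeff-product}) can be reconstructed from $\ket{\psi}$ up to a single global phase. First I would recast the defining relation in operator form: by \textbf{Definition~\ref{Def-single-mode-type}} a single-mode type state reads
\begin{eqnarray}
\ket{\psi}&=&\tfrac{1}{\sqrt{N!}}\,\Big(\textstyle\sum_{i=1}^M U_i\,a_i^\dag\Big)^{\!N}\ket{0},
\end{eqnarray}
and its equivalence with Eq.~(\ref{single-mode-coeff-product}) follows from the multinomial expansion together with $\prod_i {a_i^\dag}^{n_i}\ket{0}=\sqrt{\prod_i n_i!}\,\ket{n_1\dots n_M}$. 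Suppose now that two admissible vectors $\vec U=(U_1,\dots,U_M)$ and $\vec V=(V_1,\dots,V_M)$, each normalised to one, yield the \emph{same} state $\ket{\psi}$.

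Second, I would exploit the fact that the number states $\ket{n_1\dots n_M}$ form an orthonormal basis of the $N$-particle sector, so equality of the two states is faithfully encoded by equality of the corresponding homogeneous degree-$N$ polynomials in the \emph{commuting} symbols $x_i\equiv a_i^\dag$:
\begin{eqnarray}
L_U(x)^N&=&L_V(x)^N,\qquad L_U(x):=\textstyle\sum_i U_i\,x_i.
\end{eqnarray}
Because $\mathbb{C}[x_1,\dots,x_M]$ is a unique factorisation domain and each nonzero linear form $L_U,L_V$ is irreducible (both are nonzero since $\sum_i|U_i|^2=\sum_i|V_i|^2=1$), the equality of $N$-th powers forces $L_U=\omega\,L_V$ for some constant $\omega$ with $\omega^N=1$. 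Hence $\vec U=\omega\,\vec V$ with $|\omega|=1$, which is exactly the claimed uniqueness up to an overall phase.

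The main subtlety — and what makes the statement nontrivial — is the phase bookkeeping. Reading off a single diagonal coefficient, e.g. $\psi_{N0\dots0}=U_1^N$, pins down each $U_i$ only up to an \emph{independent} $N$-th root of unity, so a naive coordinate-wise inversion does not give uniqueness. The content of the lemma is that these separate root-of-unity ambiguities are all locked together into one global factor; the factorisation argument achieves this automatically, and the same conclusion can be reached by hand by fixing one nonvanishing component $U_{i_0}$ and noting that the mixed coefficients built from products $U_{i_0}^{\,p}U_j^{\,q}$ determine every ratio $U_j/U_{i_0}$ with no residual ambiguity. Either route shows that once the phase of one nonzero component is fixed, the whole vector $\vec\alpha_\psi$ is determined, completing the proof.
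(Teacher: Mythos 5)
Your proof is correct, and its main route is genuinely different from the paper's. The paper argues coordinate-wise: comparing the coefficients with $n_i=N$ gives $|\alpha_i|=|\beta_i|$, then the ratios $\gamma_i=\alpha_i/\beta_i=e^{i\varphi_i}$ are constrained by the mixed coefficients with $n_i=N-1,\,n_j=1$, and a short divisibility argument modulo $N$ forces all the phases $\varphi_i$ to coincide. You instead identify the $N$-particle sector with degree-$N$ homogeneous polynomials in the commuting symbols $a_i^\dag$ and invoke unique factorisation in $\mathbb{C}[x_1,\dots,x_M]$: from $L_U^N=L_V^N$ (or $=e^{i\theta}L_V^N$, absorbing an $N$-th root of the phase) the irreducibility of nonzero linear forms yields $L_U=\omega L_V$ with $|\omega|=1$ in one stroke. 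What your route buys is brevity and automatic handling of vanishing components --- the paper's division by $\beta_i$ tacitly assumes $\beta_i\neq0$, which is harmless only because $|\alpha_i|=|\beta_i|$ forces the zeros to match, a point the paper leaves implicit. What it costs is the (standard but unstated in the paper) identification of the bosonic $N$-particle Fock sector with the degree-$N$ part of a polynomial ring, which you correctly justify via the orthonormality of the number states; note this identification is only available for bosons, which is the only case where the lemma is nontrivial. Your closing "by hand" remark --- fixing one nonvanishing $U_{i_0}$ and recovering every ratio $U_j/U_{i_0}$ from the mixed coefficients --- is essentially the paper's own argument, so you have in effect supplied both proofs.
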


Let us define the \textit{ontology} of the model by postulating that each mode $i=1,...\,,M$ carries a complex number $\alpha_i$  and an integer $k_i$. The $\alpha_i$'s will be called \textit{amplitudes}, and $k_i$'s will be identified with the \textit{number of particles} in $i$-th mode $i$. The \textit{hidden variable space} is thus defined as 
\begin{eqnarray}\label{HV-space}
\Lambda&:=&(\mathbb{C}\times\mathbb{N})^M\ =\ \mathbb{C}^M\times\mathbb{N}^M\ni(\vec{\alpha},\vec{k})\,.
\end{eqnarray}
It is composed of subspaces $\Lambda_i=\mathbb{C}\times\mathbb{N}$ associated with each individual local mode, i.e. $\Lambda=\Lambda_1\times...\times\Lambda_M$. In the following we will use the notation ${\lambda}\equiv(\vec{\alpha},\vec{k})\!\in\!\Lambda$.

Furthermore, to complete the model, we need to define how the 
ontic states $(\vec{\alpha},\vec{k})\rightarrow (\vec{\alpha}',\vec{k}')$ \textit{evolve} under the action of the basic elements building passive linear optical setups. Importantly, the \textit{locality} condition requires that the gates defined below involve only the variables propagating along the paths meeting in the gate.

For the \textit{beam splitter} associated with the unitary $\mathbb{V}$ in paths $s$ and $t$, we postulate
\begin{eqnarray}\label{def-beam-splitter}
\!\!\begin{array}{lll}
(\alpha_s,\alpha_t)&\!\!\longrightarrow\!\!&(\alpha_s',\alpha_t')\,:=\,(\alpha_s,\alpha_t)\,\mathbb{V}\,,\vspace{0.1cm}\\
(k_s,k_t)&\!\!\longrightarrow\!\!&(k_s',k_t')\ \ \textit{with prob.}\ \begin{pmatrix}k\\k_s'\,,k_t'\end{pmatrix}\tfrac{|\alpha_s'|^{2k_s'}\,|\alpha_t'|^{2k_t'}}{(|\alpha_s'|^2+|\alpha_t'|^2)^k}\,,
\end{array}\ \ 
\end{eqnarray}
where $k:=k_s+k_t=k_s'+k_t'=:k'$ is the number of particles, which is preserved. After the gate, the particles are multinomially distributed in modes $s$ and $t$ with probabilities $p_s=\nicefrac{|\alpha_s'|^{2}}{(|\alpha_s'|^2+|\alpha_t'|^2)}$ and $p_t=\nicefrac{|\alpha_t'|^{2}}{(|\alpha_s'|^2+|\alpha_t'|^2)}$. Note that the latter are well-normalised and given by the transformed amplitudes $(\alpha_s',\alpha_t')$. Such a defined beam splitter is a local stochastic gate that operates exclusively in the intersecting paths.

The \textit{phase shifter} in path $s$ introduces phase $e^{i\varphi}$ to the amplitude in that path, i.e, we have
\begin{eqnarray}\label{def-phase-shifter}
\alpha_s\ \longrightarrow\ \alpha_s'\,:=\,e^{i\varphi}\alpha_s &\ \text{and}\ &
k_s\ \longrightarrow\ k_s'\,:=\,k_s\,,
\end{eqnarray}
which is a local deterministic gate.

Finally, each path ends with the \textit{number-resolving detector}, which reveals the number of particles\begin{eqnarray}\label{def-detector}
(\alpha_s,k_s)\ \Longrightarrow\ k_s\,.
\end{eqnarray}
Its action is action clearly local.

This completes the list of basic elements in the model that enable the construction of any passive optical setup. It remains to demonstrate that the model reconstructs all the quantum-mechanical predictions for single-mode type states in any passive linear optical experiment. 
\vspace{0.2cm}

\textsf{\textit{$\bullet$ \myuline{Check of quantum-like behaviour}}}\vspace{0.1cm}

Let us postulate the \textit{epistemic state} $\mu_\psi({\lambda})$ associated with the single-mode type quantum state $\ket{\psi}$. Using the representation in Eq.~(\ref{psi-alpha}), we define
\begin{eqnarray}\label{epistemic-state}
\mu_\psi({\lambda})\,\equiv\,\mu_\psi(\vec{\alpha},\vec{k})
\,:=\,\tfrac{1}{2\pi}\,\delta_{\vec{\alpha}\sim\vec{\alpha}_\psi}\,\binom{N}{\vec{k}}\,\prod_{i=1}^M\,|{\alpha}_{\scriptscriptstyle{i}}|^{2k_i}\,,\ \ \ \ 
\end{eqnarray}
where $\delta_{\vec{\alpha}\sim\vec{\alpha}_\psi}$ is the Dirac $\delta$-function equal $1$ if $\vec{\alpha}=e^{i\varphi}\,\vec{\alpha}_\psi$ for some $\varphi\in[0,2\pi)$, otherwise equal $0$. In other words, the epistemic state $\mu_\psi({\lambda})$ is \textit{uniformly} distributed on $\vec{\alpha}$'s equal up to overall phase to $\vec{\alpha}_{\psi}$, and $\vec{k}$'s are \textit{multinomially} distributed with probabilities $p_i=|{\alpha_\psi}_{\scriptscriptstyle{i}}|^2$. 

The crucial observation is that the above-defined class of epistemic states
$\mu_\psi({\lambda})$ is closed under the dynamics generated by the gates in the model and the evolution is congruent with the quantum-mechanical description.

\begin{lemma}[Equivariance]\label{equivariance}\ \\
The epistemic states in Eq.~(\ref{epistemic-state}) are preserved under the action of any configuration of gates defined in the model.
Furthermore, their evolution $\mu_\psi({\lambda})\rightarrow\mu_{\psi'}({\lambda})$ follows the quantum-mechanical rules $\psi\rightarrow{\psi'}$ as laid out in Eqs.~(\ref{BS}) and (\ref{PS}), and the observed experimental statistics is given by Eq.~(\ref{quantum-statistics}).
\end{lemma}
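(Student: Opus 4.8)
The plan is to verify equivariance one gate at a time and then compose, exploiting that the epistemic state Eq.~(\ref{epistemic-state}) factorises into a phase-averaged $\delta$-factor on the amplitudes $\vec{\alpha}$ and a single multinomial over the counts $\vec{k}$ with weights $p_i=|\alpha_i|^2$. First I would treat the deterministic amplitude sector. Representing the single-mode type state through its unique vector $\vec{\alpha}_\psi=(U_1,\dots,U_M)$ (Lemma~\ref{lemma-psi-alpha}, Eq.~(\ref{psi-alpha})) and expanding $(\sum_i U_i\,a_i^\dag)^N$ by the multinomial theorem to recover Eq.~(\ref{single-mode-coeff-product}), a direct substitution of Eq.~(\ref{U-a-dag}) shows that the quantum rules Eqs.~(\ref{BS})--(\ref{PS}) carry $\vec{\alpha}_\psi$ to precisely the image prescribed by the model's amplitude updates $(\alpha_s',\alpha_t')=(\alpha_s,\alpha_t)\,\mathbb{V}$ and $\alpha_s'=e^{i\varphi}\alpha_s$ of Eqs.~(\ref{def-beam-splitter})--(\ref{def-phase-shifter}). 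Consequently the support of $\delta_{\vec{\alpha}\sim\vec{\alpha}_\psi}$ is mapped onto $\{e^{i\varphi}\vec{\alpha}_{\psi'}\}$, where $\ket{\psi'}$ is the quantum-evolved state, which fixes the amplitude factor of $\mu_{\psi'}$.

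Next I would verify that the count sector keeps its multinomial form with the updated weights $p_i'=|\alpha_i'|^2$. For the phase shifter this is immediate, since Eq.~(\ref{def-phase-shifter}) leaves every $|\alpha_i|$ unchanged. The beam splitter is the crux, and here I would use the standard grouping property of the multinomial: splitting the modes into the acted-upon block $\{s,t\}$ and the untouched remainder, the joint law of (block total, remainder counts) is multinomial with block weight $q=p_s+p_t$, while conditioned on the block total $k=k_s+k_t$ the pair $(k_s,k_t)$ is binomial with split $p_s/q,\,p_t/q$. Because $\mathbb{V}$ is unitary, the block weight is preserved, $|\alpha_s'|^2+|\alpha_t'|^2=|\alpha_s|^2+|\alpha_t|^2=q$, so the outer multinomial over (block total, remainder) is untouched; the stochastic rule in Eq.~(\ref{def-beam-splitter}) merely resamples the inner binomial with the new split $|\alpha_s'|^2/q,\,|\alpha_t'|^2/q$. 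Multiplying the two layers, the $q^{k}$ coming from the outer multinomial cancels the $q^{k}$ in the denominator of the resampling weight, leaving exactly $\binom{N}{\vec{k}'}\prod_i|\alpha_i'|^{2k_i'}$, i.e. the count factor of $\mu_{\psi'}$.

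With both sectors matched gate by gate, I would close by induction over the gate sequence: locality holds because every update in Eqs.~(\ref{def-beam-splitter})--(\ref{def-phase-shifter}) touches only the paths meeting in the gate, and the family of epistemic states is closed under each step while tracking $\psi\to\psi'$, which is exactly the asserted equivariance. The detection statistics then follow by deferring all detectors to the end (Eq.~(\ref{def-detector})) and marginalising $\mu_\psi$ over $\vec{\alpha}$: the $\tfrac{1}{2\pi}$-normalised phase integral collapses the $\delta$-factor and yields $P(\vec{k})=\binom{N}{\vec{k}}\prod_i|U_i|^{2k_i}$, which equals $|\psi_{\vec{n}}|^2$ by Eq.~(\ref{single-mode-coeff-product}) and hence reproduces Eq.~(\ref{quantum-statistics}). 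The main obstacle I anticipate is precisely the beam-splitter step, where one must check that the joint update of amplitudes and counts is mutually consistent: the binomial weights in Eq.~(\ref{def-beam-splitter}) are built from the \emph{already-transformed} amplitudes, and it is the unitarity of $\mathbb{V}$ (preservation of $q$, and of $\sum_i|\alpha_i|^2=1$) that makes the multinomial grouping/ungrouping cancel cleanly and keeps the epistemic family invariant.
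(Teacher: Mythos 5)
Your proposal is correct and follows essentially the same route as the paper's proof: gate-by-gate verification using the right-multiplication rule for $\vec{\alpha}_\psi$ (Proposition~\ref{alpha-transformation}) in the deterministic amplitude sector, closure of the multinomial count distribution under the beam-splitter resampling via the binomial identity and unitarity of $\mathbb{V}$, composition by modularity, and marginalisation over $\vec{\alpha}$ to recover Eq.~(\ref{quantum-statistics}). Your "grouping/ungrouping of the multinomial" is just the probabilistic reading of the paper's explicit sum over $k_s',k_t'$ in Eqs.~(\ref{equivariance-proof-BS-4})--(\ref{equivariance-proof-BS-6}), and you correctly flag the one genuinely delicate point, namely that the resampling weights are built from the already-transformed amplitudes.
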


In conclusion, when restricted to the class of epistemic states $\mu_\psi({\lambda})$, the predictions of the model are identical to the quantum-mechanical description of single-mode type states. This is sufficient to justify the existence of a generic local hidden variable model that simulates the behaviour of any single-mode type state in arbitrary passive linear optical experiments.
Notice the resemblance of our construction model to the de~Broglie-Bohm (or pilot wave) theory~\cite{Bo52}, with the caveat that the latter is non-local. Our model can be viewed as the maximum achievable as regards local simulation of multi-particle quantum states in passive linear optical setups.

\vspace{0.5cm}
\noindent{\small\bf\textsf{{Acknowledgements}}}\\
We thank Andrew Jordan, Matthew Leifer and Marek Żukowski for helpful discussions.
PB acknowledges support from the Polish-U.S. Fulbright Commission.
MM acknowledges partial support from the Foundation for Polish Science (IRAP project, ICTQT, contract no. MAB/2018/5, co-financed by EU within the Smart Growth Operational Programme).


\vspace{0.5cm}
\noindent{\small\bf\textsf{Additional information}}\\
The authors declare no competing interests.

\bibliography{/Users/pblasiak/GoogleDrive/Library/CombQuant}

\newpage\ \newpage

\title[]{All non-local states of identical particles}


\author{Pawel \surname{Blasiak}}
\email{pawel.blasiak@ifj.edu.pl}
\affiliation{Institute for Quantum Studies, Chapman University, Orange, CA 92866, USA}
\affiliation{Institute of Nuclear Physics, Polish Academy of Sciences, 31342 Krak\'ow, Poland}

\author{Marcin \surname{Markiewicz}}
\email{marcinm495@gmail.com}
\affiliation{International Centre for Theory of Quantum Technologies, University of Gda\'nsk, 80308 Gda\'nsk, Poland}


\setcounter{page}{1} 

\onecolumngrid
\begin{center}
{\large\bf\text{\hypertarget{Supplementary-Information}{\myuline{\textit{Supplementary Information}}}}\vspace{0.3cm}\\Identical particles as a genuine non-local resource}\vspace{0.4cm}\\
Pawel Blasiak$^{\text{1,2}}$ and Marcin Markiewicz $^{\text{3,4}}$ \vspace{0.1cm}\\

\small\emph{$^{\text{1}}$Institute for Quantum Studies, Chapman University, Orange, CA 92866, USA\\
$^{\text{2}}$Institute of Nuclear Physics, Polish Academy of Sciences, 31342 Krak\'ow, Poland\\
$^{\text{3}}$Institute of Theoretical and Applied Informatics, Polish Academy of Sciences, 44100 Gliwice, Poland\\
$^{\text{4}}$International Centre for Theory of Quantum Technologies, University of Gda\'nsk, 80308 Gda\'nsk, Poland}
\end{center}
\vspace{0.4cm}

\noindent In this Supplement we complete the proof of \textbf{Theorem~\ref{theorem}} in the \textbf{Methods} section from the main manuscript.

\noindent \textit{[The numbering of equations follows the main text.]}

\vspace{0.4cm}
\begin{center}
\textsf{\textbf{A. Single-mode type states: Some useful facts and proof of Lemma~\ref{lemma-psi-alpha}}}\vspace{0.1cm}
\end{center}

By \textbf{Definition~\ref{Def-single-mode-type}}, for each single-mode type state, there is some unitary $\mathbb{U}\equiv [U_{ij}]$ that transforms it to a state with all particles in the same mode. A straightforward consequence is that
\begin{observation}
The class of \textbf{single-mode type states} is \myuline{invariant} with respect to passive linear optical transformations.
\end{observation}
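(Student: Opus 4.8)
The plan is to recognise the single-mode type states as a single orbit under the group of passive linear optical transformations, so that invariance becomes a formal consequence of the group acting on its own orbit. By Eq.~(\ref{U-a-dag}), every passive linear optical transformation is encoded by a unitary on the single-particle space $\mathbb{C}^M$, and the induced action on the Fock space is a representation of that group: applying first the transformation associated with $\mathbb{U}$ and then the one associated with $\mathbb{V}$ sends $a_i^\dag\mapsto\sum_k(\mathbb{U}\mathbb{V})_{ik}\,a_k^\dag$, so the composite is again a passive linear optical transformation governed by the product unitary. Hence the realisable transformations are closed under composition, and by Ref.~\cite{ReZeBeBe94} they exhaust the full unitary group on $\mathbb{C}^M$.

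First I would restate \textbf{Definition~\ref{Def-single-mode-type}} in orbit language: a state is of single-mode type precisely when it lies in the orbit of the reference state $\ket{N,0,\dots,0}$ under this group, i.e.\ when $\ket{\psi}=\mathbb{U}\ket{N,0,\dots,0}$ for some unitary $\mathbb{U}$ as in Eq.~(\ref{single-mode-def}). With this reformulation the claim is immediate. Concretely, let $\ket{\psi}$ be of single-mode type, so $\ket{\psi}=\mathbb{U}\ket{N,0,\dots,0}$, and let $\mathbb{V}$ be an arbitrary passive linear optical transformation. Then $\mathbb{V}\ket{\psi}=\mathbb{V}\mathbb{U}\ket{N,0,\dots,0}=(\mathbb{V}\mathbb{U})\ket{N,0,\dots,0}$, where $\mathbb{V}\mathbb{U}$ is again a unitary on $\mathbb{C}^M$ and therefore realisable by passive linear optics. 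Thus $\mathbb{V}\ket{\psi}$ satisfies Eq.~(\ref{single-mode-def}) with $\mathbb{U}$ replaced by $\mathbb{V}\mathbb{U}$, so it is again of single-mode type, which is exactly the asserted invariance.

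The only care needed is in the bookkeeping rather than in any substantive obstacle. One must verify that the lift of the single-particle unitaries to the Fock space is genuinely a homomorphism, so that a sequence of optical elements corresponds to the matrix product of the associated single-particle unitaries; this follows directly from the linear substitution rule in Eq.~(\ref{U-a-dag}). One should also note that when $\mathbb{U}$ and $\mathbb{V}$ nominally act on different numbers of modes, both are to be regarded as elements of a common unitary group by padding with identities on the extra modes, after which composition is well defined. Neither point is a genuine difficulty, so the whole content of the \textbf{Observation} reduces to the orbit characterisation of single-mode type states together with the realisability of all unitaries guaranteed by Ref.~\cite{ReZeBeBe94}.
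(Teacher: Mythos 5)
Your proof is correct and follows the same route the paper takes: the paper simply calls the Observation a ``straightforward consequence'' of \textbf{Definition~\ref{Def-single-mode-type}}, and the implicit argument is exactly your orbit/composition one --- $\mathbb{V}\ket{\psi}=(\mathbb{V}\mathbb{U})\ket{N,0,\dots,0}$ with the product unitary again realisable by passive linear optics via Ref.~\cite{ReZeBeBe94}. Your added checks (the homomorphism property of the Fock-space lift and padding with identities) are harmless bookkeeping that the paper leaves unstated.
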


Furthermore, using Eq.~(\ref{single-mode-def}), we can write
\begin{eqnarray}
\ket{\psi}&=&\mathbb{U}\ket{N,0,...\,,0}\,=\,\frac{{{a'\!}_1^{\,\,\dag}}^{N}}{\sqrt{N!}}\ket{0}\ \stackrel{\scriptscriptstyle{(\ref{U-a-dag})}}{=}\ \frac{1}{\sqrt{N!}}\Big[\,{\sum}_{i=1}^M\,U_{1i}\,a_i^\dag\,\Big]^N\ket{0}\ =\ \sum_{{|\vec{n}|=N}}\binom{N}{\vec{n}}^{\!\nicefrac{1}{2}}\,\prod_{i=1}^M\,U_{1i}^{n_i}\,\frac{{a_i^\dag}^{n_i}}{\sqrt{n_i!}}\ket{0},
\end{eqnarray}
Now, comparing the terms with Eq.~(\ref{N-particle-state-adag}) and taking $U_{i}\equiv U_{1i}$, we get Eq.~(\ref{single-mode-coeff-product}), i.e.,
\begin{eqnarray}\label{single-mode-coeff-product-lemma}
\psi_{n_1\,...\,n_M}&=&\binom{N}{\vec{n}}^{\!\nicefrac{1}{2}}\,\prod_{i=1}^M\,U_i^{n_i}.
\end{eqnarray}
This leads to the simple vector representation of single-mode type states in Eq.~(\ref{psi-alpha}), i.e.,
\begin{eqnarray}\label{psi-alpha-lemma}
\ket{\psi}&\stackrel{\text{1:1}}{\xlrsquigarrow}&\vec{\alpha}_\psi\,,
\end{eqnarray}
where $\vec{\alpha}_\psi\equiv(U_1,...\,,U_M)$ is uniquely defined (up to overall phase) as stated in \textbf{Lemma~\ref{lemma-psi-alpha}}.

\vspace{0.4cm}
\textbf{\textsf{\textit{A.1. {Proof of \textbf{Lemma~\ref{lemma-psi-alpha}}.}}}}\vspace{0.3cm}

Suppose we have two representations of the single-mode type state $\ket{\psi}$ via Eq.~(\ref{single-mode-coeff-product-lemma}), say by vectors $\vec{\alpha}$ and $\vec{\beta}$. This means that we have
\begin{eqnarray}\label{uniqueness-1}
\alpha_1^{n_1}\,\alpha_2^{n_2}...\,\alpha_M^{n_M}&=&e^{i\theta}\,\beta_1^{n_1}\,\beta_2^{n_2}...\,\beta_M^{n_M}\qquad\qquad\text{for all $n_1+n_2+...+n_M=N$},
\end{eqnarray}
where $e^{i\theta}$ is some fixed phase due to overall phase redundancy for quantum states. If we take $n_i=N$ (and the rest $0$), then it entails $|\alpha_i|=|\beta_i|$. Let us define $\gamma_i:=\tfrac{\alpha_i}{\beta_i}=e^{i\varphi_i}$. Then, the condition Eq.~(\ref{uniqueness-1}) rewrites as
\begin{eqnarray}\label{uniqueness-2}
\gamma_1^{n_1}\,\gamma_2^{n_2}...\,\gamma_M^{n_M}&=&e^{i\theta}\qquad\qquad\text{for all $n_1+n_2+...+n_M=N$}.
\end{eqnarray}
Taking again $n_i=N$ (and the rest $0$) we infer that $\gamma_i^N=e^{iN\varphi_i}=e^{i\theta}$, which gives
\begin{eqnarray}\label{uniqueness-3}
\varphi_i=\tfrac{\theta}{N}+k_i\tfrac{2\pi}{N}\qquad\text{for some $k_i=0,...,N-1$},
\end{eqnarray}
and the condition Eq.~(\ref{uniqueness-2}) rewrites as
\begin{eqnarray}\label{uniqueness-4}
n_1k_1+n_2k_2+...+n_Mk_M&=&Nl_{\vec{n}}\qquad\qquad\text{for all $n_1+n_2+...+n_M=N$},
\end{eqnarray}
for some $l_{\vec{n}}$'s in $\mathbb{N}$ (i.e., for each choice of $n_i$'s the expression on the left must be a multiple of $N$). 

Now, take any two indices $i$ and $j$ for which we put $n_i=N-1$ and $n_j=1$. Then the condition Eq.~(\ref{uniqueness-4}) gives
\begin{eqnarray}\label{uniqueness-5}
(N-1)k_i+k_j\ =\ Nl\quad\Rightarrow\quad k_j-k_i\ =\ N(l-k_i),\qquad\text{for some $l\in\mathbb{N}$}.
\end{eqnarray}
Since $|k_j-k_i|\leqslant N-1$ from Eq.~(\ref{uniqueness-3}), this necessitates that $k_j-k_i=0$. Therefore, we must have $\varphi_i=\varphi_j$  for all $i$ and $j$. This proves that both representations may differ only by an overall phase $\vec{\alpha}=e^{i\theta}\vec{\beta}$.

\vspace{0.4cm}
\textbf{\textsf{\textit{A.2. {Some useful facts about the representation in Eq.~(\ref{psi-alpha})/(\ref{psi-alpha-lemma}).}}}}\vspace{0.3cm}

One can easily calculate how the vector representation of single-mode type states $\vec{\alpha}_{\psi}$ transforms under the action of a unitary $\mathbb{U}=[U_{ij}]$ in Eq.~(\ref{U-a-dag}).
\begin{proposition}\label{alpha-transformation}
For a \textbf{single-mode type} state, its vector representation Eq.~(\ref{psi-alpha})/(\ref{psi-alpha-lemma}) transforms by multiplication on the \myuline{right}, i.e. we have
\begin{eqnarray}
\text{if}\ \ 
\ket{\psi}\ \xymatrix{\ar[r]^{\atop \mathbb{U}} &}\  \ket{\psi'}\ ,\ \ \text{then}\ \ 
\vec{\alpha}_{\psi}\ \xymatrix{\ar[r]^{\atop \mathbb{U}} &}\  \vec{\alpha}_{\psi'}\ =\ \vec{\alpha}_{\psi}\,\mathbb{U}\,.
\end{eqnarray}
\end{proposition}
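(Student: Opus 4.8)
The plan is to exploit the explicit compact form of a single-mode type state and push the transformation rule Eq.~(\ref{U-a-dag}) straight through it. First I would use \textbf{Definition~\ref{Def-single-mode-type}} together with the identification $U_i \equiv U_{1i}$ (as in the derivation of Eq.~(\ref{single-mode-coeff-product-lemma}) above) to write the state as a single creation operator raised to the $N$-th power acting on the vacuum,
\[
\ket{\psi} \;=\; \frac{1}{\sqrt{N!}}\Big(\sum_{i=1}^M U_i\, a_i^\dag\Big)^{\!N}\ket{0},
\]
so that by construction $\vec{\alpha}_\psi = (U_1,\dots,U_M)$ collects exactly the single-particle amplitudes appearing inside the bracket.

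Next I would apply the passive linear optical map $\mathbb{U}=[U_{ij}]$, which by Eq.~(\ref{U-a-dag}) sends each creation operator $a_i^\dag \mapsto \sum_j U_{ij}\,a_j^\dag$ and leaves the vacuum invariant. Substituting this into the bracket and interchanging the two finite sums gives
\[
\ket{\psi'} \;=\; \frac{1}{\sqrt{N!}}\Big(\sum_{j=1}^M \big(\textstyle\sum_{i=1}^M U_i\, U_{ij}\big)\, a_j^\dag\Big)^{\!N}\ket{0}.
\]
The crucial observation is that the inner coefficient $\sum_i U_i\, U_{ij}$ is precisely the $j$-th component of the row-vector--matrix product $\vec{\alpha}_\psi\,\mathbb{U}$. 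Hence $\ket{\psi'}$ is again a single power of one creation operator on the vacuum, i.e.\ of the single-mode type, with amplitude vector read off as $\vec{\alpha}_\psi\,\mathbb{U}$.

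Finally I would invoke \textbf{Lemma~\ref{lemma-psi-alpha}} to close the argument. Since the transformed state $\ket{\psi'}$ is manifestly of single-mode type (consistent with the \textbf{Observation} that this class is invariant under passive linear optics), it has a well-defined representation $\vec{\alpha}_{\psi'}$, unique up to overall phase; the vector computed above must therefore coincide with it, giving $\vec{\alpha}_{\psi'} = \vec{\alpha}_\psi\,\mathbb{U}$ as claimed. I do not expect any serious obstacle, as the computation is entirely routine. The only points requiring care are the direction of the contraction --- the old amplitudes contract with the first (row) index of $U_{ij}$ while the free second (column) index labels the new amplitudes, which is exactly right-multiplication $\vec{\alpha}_\psi\,\mathbb{U}$ rather than left --- and the appeal to uniqueness via \textbf{Lemma~\ref{lemma-psi-alpha}}, which guarantees that $\vec{\alpha}_{\psi'}$ is genuinely well defined and not merely one admissible representative.
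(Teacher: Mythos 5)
Your proposal is correct and follows essentially the same route as the paper: expand the single-mode type state as $\tfrac{1}{\sqrt{N!}}\big[\sum_i {\alpha_\psi}_i\,a_i^\dag\big]^N\ket{0}$, push Eq.~(\ref{U-a-dag}) through the bracket, and swap the sums to read off ${\alpha_{\psi'}}_j=\sum_i{\alpha_\psi}_i U_{ij}$, i.e.\ right-multiplication. Your closing appeal to \textbf{Lemma~\ref{lemma-psi-alpha}} for well-definedness is a small extra precaution the paper leaves implicit, but it is not a different argument.
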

\begin{proof}
From the \textbf{Definition~\ref{Def-single-mode-type}} of single-mode type states, we can write
\begin{eqnarray}\nonumber
\ket{\psi}=\tfrac{1}{\sqrt{N!}}\bigg[\sum_{i=1}^M{\alpha_\psi}_{\scriptscriptstyle{i}}\,a_i^\dag\bigg]^N\!\!\ket{0}&\xymatrix{\ar[r]^{\atop \mathbb{U}} &}&\tfrac{1}{\sqrt{N!}}\bigg[\sum_{i=1}^M{\alpha_\psi}_{\scriptscriptstyle{i}}\Big(\sum_{j=1}^MU_{ij}\,a_j^\dag\Big)\bigg]^N\!\!\ket{0}\ =\ \tfrac{1}{\sqrt{N!}}\bigg[\sum_{j=1}^M\Big(\sum_{i=1}^M{\alpha_\psi}_{\scriptscriptstyle{i}}U_{ij}\Big)a_j^\dag\bigg]^N\!\!\ket{0}\\
&&=\ \tfrac{1}{\sqrt{N!}}\bigg[\sum_{j=1}^M{\alpha_{\psi'}}_{\!\!\scriptscriptstyle{j}}\,a_j^\dag\bigg]^N\!\!\ket{0}\ =\ \ket{\psi'}.
\end{eqnarray}
where ${\alpha_{\psi'}}_{\!\!\scriptscriptstyle{j}}:=\sum_{i=1}^M{\alpha_\psi}_{\scriptscriptstyle{i}}U_{ij}$. This proves that $\vec{\alpha}_{\psi}$'s transform by multiplying on the right $\vec{\alpha}_{\psi'}\,=\,\vec{\alpha}_{\psi}\,\mathbb{U}$.
\end{proof}

For future reference, let us write down the quantum-mechanical distribution of particles in Eq.~(\ref{quantum-statistics}) using the vector representation $\vec{\alpha}_{\psi}$ in  Eq.~(\ref{psi-alpha})/(\ref{psi-alpha-lemma}).
\begin{observation}\label{observation-probability-single-mode-alpha}
A single-mode type state measured in all modes has the following statistics
\begin{eqnarray}\label{probability-single-mode-alpha}
\textnormal{Prob}_{\ket{\psi}}(n_1,...\,,n_M)&\stackrel{\scriptscriptstyle{(\ref{quantum-statistics})}}{=}&|\psi_{n_1\,...\,n_M}|^2\ \stackrel{\scriptscriptstyle{(\ref{single-mode-coeff-product-lemma})}}{=}\ \binom{N}{\vec{n}}\,\prod_{i=1}^M\,|{\alpha_\psi}_{\scriptscriptstyle{i}}|^{2n_i}\,.
\end{eqnarray}
\end{observation}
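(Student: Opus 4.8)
The plan is to obtain the claimed formula by direct substitution, since all the necessary ingredients are already established just above. I would start from the Born-rule statistics for terminal detection in all modes, Eq.~(\ref{quantum-statistics}), which expresses the probability of registering the occupation pattern $(n_1,\dots,n_M)$ as the squared modulus of the corresponding Fock amplitude, $|\psi_{n_1\dots n_M}|^2$. This holds for an arbitrary state and requires no further justification here; it is simply the detection rule for the passive linear optical toolkit.

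The key step is then to insert the product form of the amplitudes valid specifically for single-mode type states, namely Eq.~(\ref{single-mode-coeff-product-lemma}), $\psi_{n_1\dots n_M}=\binom{N}{\vec{n}}^{\nicefrac{1}{2}}\prod_{i=1}^M U_i^{n_i}$, which was derived from \textbf{Definition~\ref{Def-single-mode-type}} via the multinomial expansion. Taking the squared modulus, I would use that the multinomial coefficient $\binom{N}{\vec{n}}$ is real and nonnegative, so squaring its square root returns the coefficient itself, while $\bigl|\prod_i U_i^{n_i}\bigr|^2=\prod_i |U_i|^{2n_i}$ since the modulus is multiplicative and the phases drop out. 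Finally I would identify $U_i$ with the components of the vector representation, $U_i=\alpha_{\psi i}$, as fixed by Eq.~(\ref{psi-alpha-lemma}), which yields $\binom{N}{\vec{n}}\prod_i|\alpha_{\psi i}|^{2n_i}$, exactly the stated expression.

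There is essentially no obstacle: the whole argument is a one-line computation, and the only subtlety worth flagging is that the outcome depends on the $\alpha_{\psi i}$ only through their moduli, so the overall-phase ambiguity of $\vec{\alpha}_\psi$ (settled by \textbf{Lemma~\ref{lemma-psi-alpha}}) leaves the statistics well defined. For completeness I would remark that the result is precisely a multinomial distribution with single-mode probabilities $p_i=|\alpha_{\psi i}|^2$, properly normalised because $\sum_i|\alpha_{\psi i}|^2=1$; this is the reading that the later epistemic-state definition in Eq.~(\ref{epistemic-state}) and the equivariance argument rely on, which is why it is convenient to record it here as a standalone observation.
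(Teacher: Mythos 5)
Your proposal is correct and coincides with the paper's own (implicit) argument: the Observation is established exactly by substituting the product form of the coefficients from Eq.~(\ref{single-mode-coeff-product-lemma}) into the detection rule Eq.~(\ref{quantum-statistics}) and taking the squared modulus, as the chained equalities in the statement indicate. Your added remarks on phase-independence and the multinomial-distribution reading are consistent with how the paper later uses this fact in Eq.~(\ref{epistemic-state}).
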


\vspace{0.4cm}
\textbf{\textsf{\textit{A.3. {Which states are of a single-mode type?}}}}\vspace{0.3cm}

For both bosons and fermions, all single-particle states ($N\!=\!1$) are of a single-mode type; see Ref.~\cite{ReZeBeBe94}. Clearly, in the multi-particle case ($N\!\geqslant\!2$), for fermions this class is empty since the Pauli exclusion principle rules out more than one particle in the same mode. The boson statistics admits a wider variety of states with the non-trivial class of single-mode type states in the multi-particle case ($N\!\geqslant\!2$); see Fig.~\ref{Fig-Clasiffication}.

The vector representation in Eq.~(\ref{psi-alpha})/(\ref{psi-alpha-lemma}) provides a neat characterisation of single-mode type states. For a state in the form Eq.~(\ref{N-particle-state-psi})/(\ref{N-particle-state-adag}), it is sufficient to verify if there exists a normalised vector $(U_1,...\,,U_M)$ such that Eq.~(\ref{single-mode-coeff-product})/(\ref{single-mode-coeff-product-lemma}) holds. This can be checked algorithmically in a few easy steps:\vspace{0.1cm}

\textit{(1) If it exists, then we need to have $U_i^N\,=\,\psi_{0...N...0}$ (with $N$ on $i$-th place, i.e. $n_i=N$).}\vspace{0.05cm}

\textit{(2) This means that $U_i$ is $N$th root of $\psi_{0...N...0}$. 
Calculate those roots.}\vspace{0.05cm}

\textit{(3) Then check whether there is a solution for which Eq.~(\ref{single-mode-coeff-product})/(\ref{single-mode-coeff-product-lemma}) holds for all coefficients $\psi_{n_1\,...\,n_M}$.
}\vspace{0.2cm}


Let us end with a few examples illustrating different types of states:
\begin{eqnarray}\nonumber
\begin{array}{llcll}
\textit{\myuline{Single-mode type states}:}\hspace{-0.5cm}&&
\ket{N}&&\textit{all particles in the same mode\,,}\vspace{0.1cm}\\
&&\alpha\ket{10}+\beta\ket{01}&&\textit{single-particle states ($N\!=\!1$)\,,}\vspace{0.1cm}\\
&&\tfrac{1}{2}(\ket{20}+\sqrt{2}\ket{11}+\ket{02})&\quad&\textit{two-particle state $\ket{20}$ after Hadamard gate (beam splitter)\,.}\vspace{0.1cm}\\\\
\textit{\myuline{Non single-mode type states}:}\hspace{-0.5cm}&&
\ket{k\,l}&&\textit{particles in different modes for $k,l\geqslant1$\,,}\vspace{0.1cm}\\
&&\alpha\ket{N0}+\beta\ket{0N}&&\textit{NOON-type states (for $N\!\geqslant\!2$)\,,}\vspace{0.1cm}\\
&&\tfrac{1}{\sqrt{2}}(\ket{1010}+\ket{0101})&&\textit{Bell state\,.}\vspace{0.1cm}\\
\end{array}
\end{eqnarray}

\clearpage

\vspace{0.4cm}
\begin{center}
\textsf{\textbf{B. Explicit calculations of conditions in Eqs.~(\ref{Condition-YS-M2}) and (\ref{Condition-YS-M2-NOON}), and proof of Lemma~\ref{psi-M2-lemma}}}\vspace{0.3cm}
\end{center}

In the \textbf{Methods} section, we considered two cases of passive linear optical experiments, as shown in Fig.~\ref{Figs-M2-proof}, designed for the purpose of analysing two categories of states. The first one (on the left) concerns a general multi-particle two-mode state in the form
\begin{eqnarray}\label{psi-M2-proof}
\ket{\phi}&=&\sum_{n=0}^N\beta_n\,\tfrac{{a_1^\dag}^n}{\sqrt{n!}}\tfrac{{a_2^\dag}^{N-n}}{\sqrt{(N-n)!}}\ket{0}\,,
\end{eqnarray}
while the second one (on the right) is applied to the special class of NOON states (for which $\beta_1=...=\beta_{N-1}=0$)
\begin{eqnarray}\label{psi-M2-NOON-proof}
\ket{\phi}_{\!\scriptscriptstyle{NOON}}&=&\Big(\,\beta_0\,\tfrac{{a_2^\dag}^{\scriptscriptstyle{N}}}{\sqrt{N!}}+\beta_N\,\tfrac{{a_1^\dag}^{\scriptscriptstyle{N}}}{\sqrt{N!}}\,\Big)\ket{0}\,.
\end{eqnarray}

In the following, we calculate the evolution of those states in the respective setups and draw conclusions from the respective Yurke-Stoler test.\vspace{0.2cm} 

\begin{figure*}[h]
\centering
\includegraphics[width=0.47\columnwidth]{Fig-YS-M2.pdf}
\quad\qquad
\includegraphics[width=0.47\columnwidth]{Fig-YS-QE-M2.pdf}
\caption{\label{Figs-M2-proof}{\bf\textsf{\mbox{Yurke-Stoler type experiment with two-particle filter and its extension with quantum erasure (as in Figs.~\ref{Fig-YS-M2} and \ref{Fig-YS-QE-M2}).}}}\\
On the left, in the initial phase, $N-2$ particles are pulled out from the system by conditioning on $s$ and $N-s-2$ particle detections in modes 1'' \& 2''. For each $s=0,1,...\,,N-2$\,, this guarantees event-ready preparation of some two-particle state in modes 1 \& 2, which then undergo the Yurke-Stoler test. On the right,
modification of the setup on the left, designed to erase the information about the number of particles in modes 1'' \& 2''.
 }
\end{figure*}

\vspace{0.4cm}
\textbf{\textsf{\textit{B.1. {Conditions in Eq.~(\ref{Condition-YS-M2}) from the experiment in Fig.~\ref{Figs-M2-proof} (on the left).}}}}\vspace{0.3cm}

The state $\ket{\psi}$ in Eq.~(\ref{psi-M2-proof}) injected into paths 1 and 2 evolves through the two-particle filter in Fig.~\ref{Figs-M2-proof} (on the left), preparing an event-ready state heralded by the detection of $s$ and $N-s-2$ particles in paths 1'' \& 2''. Then, for each value of $s=0,1,...\,,N-2$, the post-selected system in paths 1 \& 2 is left in the following two-particle state
\begin{eqnarray}
\ket{\phi}&\xymatrix{\ar[r]^{\atop H,H}&}&\sum_{n=0}^N\ \beta_n\,\tfrac{1}{\sqrt{2}^{\,n}}\tfrac{\left(a_{1}^\dag+a_{1''}^\dag\right)^n}{\sqrt{n!}}\tfrac{1}{\sqrt{2}^{\,N-n}}\tfrac{{\left(a_{2}^\dag+a_{2''}^\dag\right)}^{N-n}}{\sqrt{(N-n)!}}\ket{0}\\\label{Two-particle-filter-M2-binomial-1}
&&=\ \tfrac{1}{\sqrt{2}^{\,N}}\sum_{n=0}^N\ \beta_n\,\tfrac{1}{\sqrt{n!\,(N-n)!}}\ \Big(\,{a_{1''}^\dag}^n+n\,{a_{1''}^\dag}^{n-1}\,{a_{1}^\dag}+\tfrac{n\,(n-1)}{2}\,{a_{1''}^\dag}^{n-2}\,{a_{1}^\dag}^{2}+...\,\Big)\\\label{Two-particle-filter-M2-binomial-2}
&&\qquad\qquad\qquad\cdot\,\Big(\,{a_{2''}^\dag}^{N-n}+({N-n})\,{a_{2''}^\dag}^{N-n-1}\,{a_{2}^\dag}+\tfrac{(N-n)\,(N-n-1)}{2}\,{a_{2''}^\dag}^{N-n-2}\,{a_{2}^\dag}^{2}+\,...\, \Big)\ket{0}
\\\label{Two-particle-filter-M2-1}
&\xrsquigarrow{\text{\!\tiny{\emph{post-select}}\!}}&\tfrac{1}{\sqrt{2}^{\,N}}\left(\ \beta_s\,\tfrac{1}{\sqrt{s!\,(N-s)!}}\tfrac{(N-s)\,(N-s-1)}{2}\ {a_{2}^\dag}^2\right.\\\label{Two-particle-filter-M2-2}
&&\qquad\quad\ \ +\ \beta_{s+1}\,\tfrac{1}{\sqrt{(s+1)!\,(N-s-1)!}}\,(s+1)\,(N-s-1)\ a_{1}^\dag\,a_{2}^\dag\\\label{Two-particle-filter-M2-3}
&&\qquad\quad\ \ \left.+\ \beta_{s+2}\,\tfrac{1}{\sqrt{(s+2)!\,(N-s-2)!}}\tfrac{(s+2)\,(s+1)}{2}\ {a_{1}^\dag}^2\ \right)\ket{0}.
\end{eqnarray}
In the above expressions, the brackets in Eqs.~(\ref{Two-particle-filter-M2-binomial-1}) and (\ref{Two-particle-filter-M2-binomial-2}) come from the binomial expansion. Then the post-selected terms in Eqs.~(\ref{Two-particle-filter-M2-1})\,-\,(\ref{Two-particle-filter-M2-3}) obtain by projection on the subspace corresponding to $s$ particles in mode 1'' and $N-s-2$ particles in mode 2'', i.e., we retain only the terms multiplying ${a_{1''}^\dag}^s\,{a_{2''}^\dag}^{N-s-2}$. For the sake of clarity, the resulting state is left unnormalised.

Now, upon successful post-selection in the auxiliary modes 1'' \& 2'', we perform the standard Yurke-Stoler test on the two-particle state in the output modes 1 and 2, given by Eqs.~(\ref{Two-particle-filter-M2-1})\,-\,(\ref{Two-particle-filter-M2-3}). Since this is a two-particle state, from \textbf{Lemma~\ref{lemma}}, we conclude that for this procedure \textit{not} to show Bell non-locality, the following condition has to be satisfied
\begin{eqnarray}
\Big(\,\beta_{s+1}\,\tfrac{(s+1)(N-s-1)}{\sqrt{(s+1)!(N-s-1)!}}\,\Big)^2&=&\beta_{s}\,\tfrac{(N-s)(N-s-1)}{\sqrt{s!(N-s)!}}\,\cdot\,\beta_{s+2}\,\tfrac{(s+2)(s+1)}{\sqrt{(s+2)!(N-s-2)!}}\,,
\end{eqnarray}
for each each instance of post-selection $s=0,1,...\,,N-2$. This is the condition $\beta^2=2\,\alpha\gamma$ for Eq.~(\ref{Psi-M2-N2}) form \textbf{Lemma~\ref{lemma}}, rewritten for the coefficients read out from Eqs.~(\ref{Two-particle-filter-M2-1})\,-\,(\ref{Two-particle-filter-M2-3}). After simplification, we get the constraints
\begin{eqnarray}\label{Condition-YS-M2-proof}
\beta_{s+1}^2\ =\ \beta_{s}\cdot\beta_{s+2}\cdot\sqrt{\tfrac{s+2}{s+1}}\,{\sqrt{\tfrac{N-s}{N-s-1}}}\,,\qquad\qquad\text{for\ \  $s=0,1,...\,,N-2$\,,}
\end{eqnarray}
which are the conditions in Eq.~(\ref{Condition-YS-M2}).

\vspace{0.4cm}
\textbf{\textsf{\textit{B.2. {Conditions in Eq.~(\ref{Condition-YS-M2-NOON}) from the experiment in Fig.~\ref{Figs-M2-proof} (on the right).}}}}\vspace{0.3cm}

For the analysis of the NOON state $\ket{\psi}_{\scriptscriptstyle{NOON}}$ in Eq.~(\ref{psi-M2-NOON-proof}), we augment the above setup with quantum erasure, as shown in Fig.~\ref{Figs-M2-proof} (on the right), and post-select on the detection of $N-2$ and $0$ particles in modes 1'' \& 2''. Then, the setup prepares the following event-ready state in modes 1 \& 2
\begin{eqnarray}
\ket{\phi}_{\scriptscriptstyle{NOON}}&\xymatrix{\ar[r]^{\atop H,H}&}&\Big(\,\beta_0\,\tfrac{1}{\sqrt{2}^{\,N}}\tfrac{\left(a_{1}^\dag+a_{1''}^\dag\right)^{\scriptscriptstyle{N}}}{\sqrt{N!}}+\beta_N\,\tfrac{1}{\sqrt{2}^{\,N}}\tfrac{\left(a_{2}^\dag+a_{2''}^\dag\right)^{\scriptscriptstyle{N}}}{\sqrt{N!}}\,\Big)\ket{0}
\\
&\xymatrix{\ar[r]^{\atop H}&}&\tfrac{1}{\sqrt{2}^{\,N}}\tfrac{1}{\sqrt{N!}}\ \Big(\,\alpha_0\,\big(\,a_{1}^\dag+\tfrac{1}{\sqrt{2}}(\,a_{1''}^\dag+a_{2''}^\dag)\big)^{\scriptscriptstyle{N}}+\beta_N\,\big(\,a_{2}^\dag+\tfrac{1}{\sqrt{2}}\,(a_{1''}^\dag-a_{2''}^\dag)\big)^{\scriptscriptstyle{N}}\,\Big)\ket{0}
\\\label{Two-particle-filter-M2-NOON}
&\xrsquigarrow{\text{\!\tiny{\emph{post-select}}\!}}&\tfrac{1}{\sqrt{2}^{\,N}}\tfrac{1}{\sqrt{N!}}\tfrac{1}{\sqrt{2}^{\,N-2}}\tfrac{N(N-1)}{2}\ \Big(\,\alpha_0\,{a_{1}^\dag}^2+\beta_N\,{a_{2}^\dag}^2\,\Big)\ket{0}.
\end{eqnarray}
As above, the last line describes the state after successful post-selection, i.e., it retains only the terms multiplying the expression ${a_{1''}^\dag}^{N-2}\,{a_{2''}^\dag}^{0}$ in the binomial expansion. This state is left unnormalised as well. 

Then, such a prepared system undergoes the Yurke-Stoler test in modes 1 \& 2. This procedure will \textit{not} reveal non-local correlations only if $\beta_0=0$ or $\beta_N=0$, which follows from \textbf{Lemma~\ref{lemma}} (since then we need to have $0=2\,\beta_0\,\beta_N$). In other words, using the notation of Eq.~(\ref{psi-M2-proof}), we get
\begin{eqnarray}\label{Condition-YS-M2-NOON-proof}
\beta_1=\beta_2=...=\beta_{N-1}=0&\quad\Longrightarrow\quad&\beta_0=0\ \ \text{or}\ \ \beta_N=0\,,
\end{eqnarray}
which is the condition in Eq.~(\ref{Condition-YS-M2-NOON}). It means that every proper NOON-type state (for $\beta_0,\beta_N\neq0$) is a genuine non-local resource.

\vspace{0.4cm}
\textbf{\textsf{\textit{B.3. {Proof of \textbf{Lemma~\ref{psi-M2-lemma}}.}}}}\vspace{0.3cm}

We need to show that the conditions in Eqs.~(\ref{Condition-YS-M2}) and (\ref{Condition-YS-M2-NOON}) (or Eqs.~(\ref{Condition-YS-M2-proof}) and (\ref{Condition-YS-M2-NOON-proof})) lead to the solution in the form 
\begin{eqnarray}\label{recurrence-resolution-proof}
\beta_n&=&\binom{N}{n}^{\!\nicefrac{1}{2}}\,U_1^n\,U_2^{N-n}\,,
\end{eqnarray}
for some choice of normalised complex parameters $U_1$ and $U_2$, i.e., $|U_1|^2+|U_2|^2=1$.

Let us look at the structure of the condition in Eq.~(\ref{Condition-YS-M2}) in which all the coefficients, except the two extreme ones $\beta_0$ and $\beta_N$, are determined by their two closest neighbours as shown below\vspace{0.2cm}
\begin{eqnarray}
\qquad\xymatrix{\beta_{\scriptscriptstyle{0}}\ar@/^/[r]&\beta_{\scriptscriptstyle{1}}\ar@/_/[r]&\ar@/_/[l]\beta_{\scriptscriptstyle{2}}\ar@/^/[r]&...\ar@/^/[l]&...&...\ar@/^/[r]&\ar@/^/[l]\beta_{\scriptscriptstyle{N-2}}\ar@/_/[r]&\ar@/_/[l]\beta_{\scriptscriptstyle{N-1}}&\beta_{\scriptscriptstyle{N}}\ar@/^/[l]}
\end{eqnarray}

\vspace{0.2cm}
\noindent with the arrows depicting functional dependencies. From the fact that each coefficient in the middle is given by a product of their two neighbours, see Eq.~(\ref{Condition-YS-M2}), we observe that 
\begin{observation}
If $\beta_i=0$ for some $0\leqslant i\leqslant N$, then we necessarily have $\beta_1=\beta_2=...=\beta_{N-1}=0$.
\end{observation}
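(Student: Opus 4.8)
The plan is to read the recurrence Eq.~(\ref{Condition-YS-M2}) as a relation among three consecutive coefficients and to exploit that its multiplicative factor never vanishes. Setting $c_s:=\sqrt{\tfrac{s+2}{s+1}}\,\sqrt{\tfrac{N-s}{N-s-1}}$, the condition reads $\beta_{s+1}^2=c_s\,\beta_s\,\beta_{s+2}$ for $s=0,\dots,N-2$, and the decisive point is that $c_s>0$ on this whole range (each radicand is a ratio of strictly positive integers, using $N\geqslant2$). Consequently, for every admissible $s$ a vanishing right-hand side forces the middle coefficient to vanish, giving the single elementary implication that $\beta_s=0$ or $\beta_{s+2}=0$ implies $\beta_{s+1}=0$.

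First I would use this implication to let zeros propagate in both directions along the chain $\beta_0,\beta_1,\dots,\beta_N$. Forward: if $\beta_s=0$ with $s\leqslant N-2$, then $\beta_{s+1}=0$, and iterating drives the zero up to $\beta_{N-1}$. Backward: writing $j=s+2$, if $\beta_j=0$ with $j\geqslant2$, then $\beta_{j-1}=0$, and iterating drives the zero down to $\beta_1$. Both are immediate inductions on the index, each step being a single use of the elementary implication above.

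Then I would conclude by cases on where the hypothesised zero $\beta_i=0$ sits. For an interior index $1\leqslant i\leqslant N-1$, forward propagation clears $\beta_i,\dots,\beta_{N-1}$ and backward propagation clears $\beta_1,\dots,\beta_i$, so every interior coefficient vanishes. The two boundary cases need only one direction: $\beta_0=0$ triggers forward propagation alone (already reaching $\beta_1,\dots,\beta_{N-1}$), and $\beta_N=0$ triggers backward propagation alone (likewise). I anticipate no genuine obstacle; the only points requiring care are verifying $c_s\neq0$ at every step — so that zeros can neither be created nor blocked at an intermediate index — and tracking the boundary indices $i=0,N$ so that the one-directional argument there still sweeps the full interior range. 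Equivalently, one may run the contrapositive: since $c_s\neq0$, a single nonzero \emph{interior} coefficient forces both of its neighbours to be nonzero, and this cascades outward to make every $\beta_j$ nonzero.
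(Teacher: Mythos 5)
Your proposal is correct and follows essentially the same route as the paper, which justifies the observation in one line by noting that each interior coefficient is a product of its two neighbours (times a nonvanishing factor) in Eq.~(\ref{Condition-YS-M2}), so any zero propagates through the whole interior chain. Your write-up merely makes the forward/backward propagation and the boundary cases $i=0,N$ explicit, which is a faithful elaboration rather than a different argument.
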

Thus, we are left with the following three trivial possibilities:
\begin{itemize}
\item[\textit{(1)}] $\quad\beta_0=0\ \ \text{or}\ \ \beta_N=0\quad\&\quad\beta_1=\beta_2=...=\beta_{N-1}=0$,
\item[\textit{(2)}] $\quad\beta_0,\beta_N\neq0\quad\&\quad\beta_1=\beta_2=...=\beta_{N-1}=0$,
\item[\textit{(3)}] $\quad\beta_0,\beta_N\neq0\quad\&\quad\beta_1,\beta_2,...\,,\beta_{N-1}\neq0$.
\end{itemize}
We will separately analyse each case.

\begin{proof}[\myuline{\textit{Case (1)}}]\ \\
It is a single-mode type state by definition. This is because then we have $\beta_0=1$ or $\beta_N=1$, which corresponds to the choice $U_1=1$, $U_2=0$ or $U_1=0$, $U_2=1$ in Eq.~(\ref{recurrence-resolution-proof}).
\end{proof}

\begin{proof}[\myuline{\textit{Case (2)}}]\ \\
It is ruled by the condition in Eq.~(\ref{Condition-YS-M2-NOON}) for $N\geq3$. We note that it is the only place where the discussion of the NOON states, giving the condition Eq.~(\ref{Condition-YS-M2-NOON}), is required. For $N=2$ we get a contradiction with Eq.~(\ref{Condition-YS-M2}), since then the latter implies $0=\beta_1^2=2\,\beta_0\,\beta_2\neq0$.
\end{proof}

\begin{proof}[\myuline{\textit{Case (3)}}]\ \\
Since all coefficients $\beta_0,\beta_1,...\,,\beta_{N}\neq0$, then we can rewrite the condition in Eq.~(\ref{Condition-YS-M2}) in the form
\begin{eqnarray}\label{Case-3-recurrence}
\beta_{s+2}\ =\ \frac{\beta_{s+1}^2}{\beta_{s}}\cdot\sqrt{\tfrac{s+1}{s+2}}\,{\sqrt{\tfrac{N-s-1}{N-s}}}\,,\qquad\qquad\text{for\ \  $s=0,1,...\,,N-2$\,.}
\end{eqnarray}
Note that by assumption, we do not have division by zero (it is the reason why the \textit{Case (2)} is considered separately and where the condition Eq.~(\ref{Condition-YS-M2-NOON}) is used).
This is recurrence for $\beta_2,\beta_3,...\,,\beta_{N}$ which are  \textit{uniquely} determined by two initial conditions $\beta_0$ and $\beta_1$. We will show that Eq.~(\ref{recurrence-resolution-proof}) provides a general solution. It suffice to check that Eq.~(\ref{recurrence-resolution-proof}):

\textit{(a)} satisfies the recurrence in Eq.~(\ref{Case-3-recurrence}), and

\textit{(b)} does not restrict the choice of initial conditions $\beta_0,\beta_1\neq0$.
\begin{itemize}
\item[]{\textit{Ad. (a).}
Check of the recurrence:
\begin{eqnarray}
\qquad\text{R.H.S. of Eq.~(\ref{Case-3-recurrence})}&\stackrel{\scriptscriptstyle{(\ref{recurrence-resolution-proof}})}{=}&\binom{N}{s+1}\ U_1^{2\,(s+1)}\,U_2^{2\,(N-s-1)}\cdot\binom{N}{s}^{\!-\nicefrac{1}{2}}U_1^{-s}\,U_2^{-(N-s)}\cdot\sqrt{\tfrac{s+1}{s+2}}\,{\sqrt{\tfrac{N-s-1}{N-s}}}\\
&=&\tfrac{N!}{(s+1)!\,(N-s-1)!}\,\sqrt{\tfrac{s!\,(N-s)!}{N!}}\,\sqrt{\tfrac{s+1}{s+2}}\,{\sqrt{\tfrac{N-s-1}{N-s}}}\ U_1^{s+2}\,U_2^{N-s-2}\\
&=&\sqrt{\tfrac{N!}{(s+2)!\,(N-s-2)!}}\ U_1^{s+2}\,U_2^{N-s-2}\\
&\stackrel{\scriptscriptstyle{(\ref{recurrence-resolution-proof}})}{=}&\text{L.H.S. of Eq.~(\ref{Case-3-recurrence})}\,.
\end{eqnarray}
}
\item[]{\textit{Ad. (b).} Arbitrary initial conditions $\beta_0,\beta_1\neq0$ can be expressed with Eq.~(\ref{recurrence-resolution-proof}) by choosing
\begin{eqnarray}
\qquad U_2\ =\ \beta_0^{\,\nicefrac{1}{N}}&\quad\text{and}\quad& U_1\ =\ \tfrac{1}{\sqrt{N}}\,U_2^{1-N}\,\beta_1\ =\ \tfrac{1}{\sqrt{N}}\,\beta_0^{\,\nicefrac{1}{N}\,-1}\,\beta_1\,.
\end{eqnarray}
}
\end{itemize}
\end{proof}
Thus, we have shown that each \textit{Case~(1)-(3)} leads to the form of Eq.~(\ref{recurrence-resolution-proof}). This concludes the proof of \textbf{Lemma~\ref{psi-M2-lemma}}.

\clearpage

\vspace{0.4cm}
\begin{center}
\textsf{\textbf{C. Proof of \textbf{Claim~\ref{claim2}} in the general case  $\bm{(N\!\geqslant\!2, M\!\geqslant\!2)}$: The inductive step $\bm{(M-1\!\leadsto\!M)}$}}\vspace{0.3cm}
\end{center}

To prove the result, we use mathematical induction on the number of modes $M\!=\!2,3,4,...$ and without making any restrictions on the number of particles $N\geqslant2$.

\begin{itemize}
\item[\textit{a)}]{\textbf{\textit{\myuline{Base case ($\bm{M\!=\!2}$)}}:}\vspace{0.1cm}\ \\ \textbf{Claim~\ref{claim2}} holds for $M\!=\!2$ (and any $N\!\geqslant\!2$).}
\end{itemize}

In other words, we have that any two-mode state
\begin{eqnarray}\label{psi-M2-general-proof}
\ket{\phi}&=&\sum_{n_1+n_2=N}\psi_{n_1n_2}\,\tfrac{{a_1^\dag}^{n_1}}{\sqrt{n_1!}}\tfrac{{a_2^\dag}^{n_2}}{\sqrt{n_2!}}\ket{0},
\end{eqnarray}
which \textit{does not} show non-local correlations has to be of the single-mode type, i.e., can be written in the form
\begin{eqnarray}\label{psi-M2-single-mode-general-proof}
\ket{\phi}&\stackrel{\scriptscriptstyle{(\ref{single-mode-def})}}{=}&\mathbb{U}\ket{N,0}\ =\ \mathbb{U}\,\frac{{a_1^\dag}^N}{\sqrt{N!}}\ket{0}\ \stackrel{\scriptscriptstyle{(\ref{U-a-dag})}}{=}\ \frac{1}{\sqrt{N!}}\Big[\,U_1\,a_1^\dag+U_2\,a_2^\dag\,\Big]^N\ket{0}\,,
\end{eqnarray}
for some unitary $\mathbb{U}$ (i.e., we have $U_1,U_2$ such that $|U_1|^2+|U_2|^2=1$).

The \textit{base case} was proven in the \textbf{Methods} section, see \textsf{\textit{Many particles \& two modes}} ($N\!\geqslant\!2, M\!=\!2$).
It remains to justify the \textit{inductive step}.

\begin{itemize}
\item[\textsl{b)}]{\textbf{\textit{\myuline{Inductive step ($\bm{M-1\!\leadsto\!M}$)}}:}\vspace{0.1cm}\ \\If claim \textbf{Claim~\ref{claim2}} holds for $M-1$ (and any $N\!\geqslant\!2$), then it also holds for $M$ (and any $N\!\geqslant\!2$), for $M\geqslant3$.}
\end{itemize}

That is, assuming that the result holds for $M-1$, we seek to find  constraints on a general $M$-mode state\begin{eqnarray}\label{N-particle-state-adag-general-proof}
\ket{\psi}&=&\sum_{n_1+\,...\,+n_M=N}\psi_{n_1\,...\,n_M}\ \prod_{i=1}^M\,\frac{{a_i^\dag}^{n_i}}{\sqrt{n_i!}}\ket{0},
\end{eqnarray}
requiring that it \textit{does not} exhibit non-local correlations in any passive linear optical experiment. Those constraints derived from considering certain experimental setups should constrain the state to the single-mode type form
\begin{eqnarray}\label{WWW}
\ket{\psi}&\stackrel{\scriptscriptstyle{(\ref{single-mode-def})}}{=}&\mathbb{W}\ket{N,0,...\,,0}\ =\ \mathbb{W}\,\frac{{a_1^\dag}^N}{\sqrt{N!}}\ket{0}\ \stackrel{\scriptscriptstyle{(\ref{U-a-dag})}}{=}\ \frac{1}{\sqrt{N!}}\Big[\,\sum_{i=1}^N\ W_i\,a_i^\dag\,\Big]^N\ket{0}\,,
\end{eqnarray}
where $\mathbb{W}$ is some unitary transformation  (i.e., we have $W_1,W_2,...,W_M$ such that $\sum_{i=1}^M|W_i|^2=1$).

Now we provide the proof of the \textit{inductive step}, which consists of two parts as illustrated in Fig~\ref{Fig-InductiveStep}. 

\begin{figure}[h]
\centering
\includegraphics[width=\columnwidth]{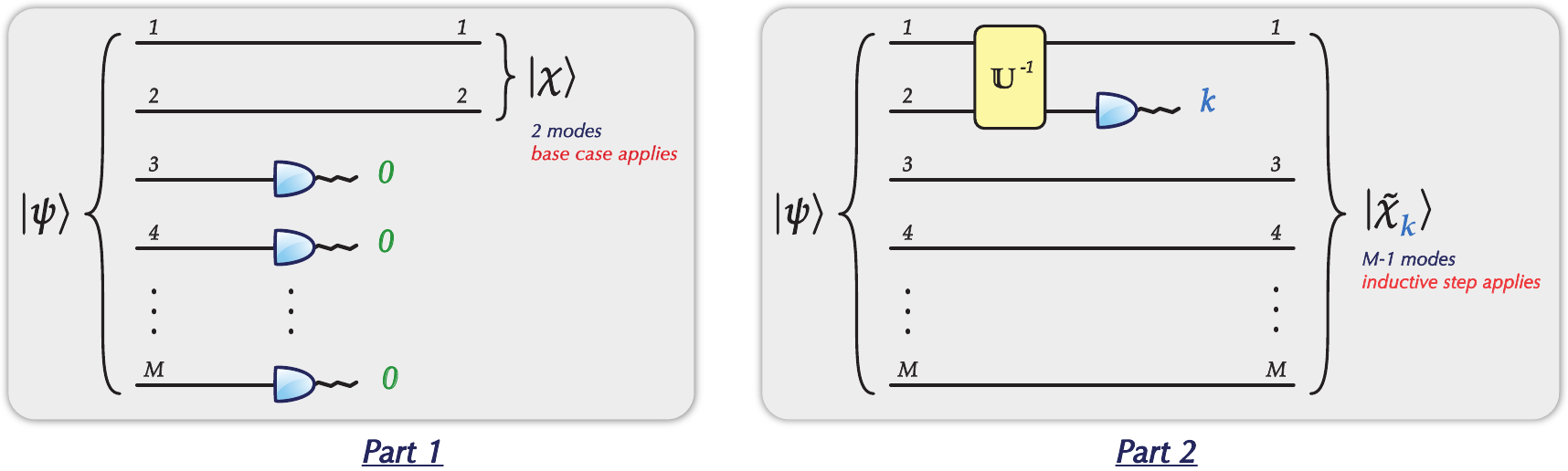}
\caption{\label{Fig-InductiveStep}{\bf\textsf{\mbox{Two parts in the proof of the inductive step.}}} On the left, conditioning on zero particle detections in modes 3,\,...\,,M prepares an event-ready two-mode state with $N$ particles in modes 1 \& 2. Because it is supposed not to show non-local correlations, it has to be of a single-mode type. From this observation, we get a useful unitary $\mathbb{U}$ simplifying the state $\ket{\psi}$. On the right, the protocol demonstrates that $\mathbb{U}^{-1}$ applied to modes 1 \& 2 of state $\ket{\psi}$ removes all the particles from mode 1. It is a conclusion from a series of constraints obtained from the inductive assumption applied to the event-ready state prepared in $M-1$ modes 1,3,\,...\,,M by conditioning on the detection of $k$ particles in mode 2.}
\end{figure}

\vspace{0.4cm}
\textsf{\textit{\textbf{\myuline{Part 1}.} {Simplification (from the base case $M\!=\!2$)}}.}\vspace{0.3cm}

Let us rewrite Eq.~(\ref{N-particle-state-adag-general-proof}) by separating out the terms with \myuline{\textit{all}} particles in modes 1 \& 2, i.e., we have
\begin{eqnarray}\label{psi-separate}
\ket{\psi}&=&\underbrace{\sum_{\substack{n_1+n_2=N\\\ \\\ }}\psi_{n_1n_20...0}\ \frac{{a_1^\dag}^{n_1}}{\sqrt{n_1!}}\frac{{a_2^\dag}^{n_2}}{\sqrt{n_2!}}\ket{0}}_{\text{\textit{all} particles in modes 1 \& 2}}\ \ +\ \ \underbrace{\sum_{\substack{n_1+...+n_M=N\\n_3+...+n_M\neq0\\\ }}\psi_{n_1n_2...n_M}\ \prod_{i=1}^M\,\frac{{a_i^\dag}^{n_i}}{\sqrt{n_i!}}\ket{0}}_{\text{at least \textit{one} particle in modes 3\,,\,...\,,\,M}}\,.
\end{eqnarray}
Now, consider a scheme with $M-2$ detectors in modes $3,...\,,M$ and post-select on the events without any detection (i.e., all particles remain in modes 1 \& 2). See Fig.~\ref{Fig-InductiveStep} (on the left). This is an event-ready scheme which prepares in modes 1 \& 2 the following state
\begin{eqnarray}\label{post}
\ket{\chi}&\sim&\sum_{{n_1+n_2=N}}\psi_{n_1n_20...0}\ \frac{{a_1^\dag}^{n_1}}{\sqrt{n_1!}}\frac{{a_2^\dag}^{n_2}}{\sqrt{n_2!}}\ket{0}\ \sim\ \frac{1}{\sqrt{N!}}\Big[\,U_1\,a_1^\dag+U_2\,a_2^\dag\,\Big]^N\ket{0}\,,
\end{eqnarray}
where we use the "$\sim$" sign to omit normalisation.
Note that this is the first part in Eq.~(\ref{psi-separate}), corresponding to the post-selected terms ${a_3^\dag}^0...\,{a_M^\dag}^{\!\!\!0}$. Since we require that such obtained state $\ket{\chi}$ \textit{{does not}} show non-local correlations, it must be of the single-mode type, as written on the right for some unitary $\mathbb{U}$. This conclusion comes from the \textit{base case} ($M\!=\!2$), see Eq.~(\ref{psi-M2-single-mode-general-proof}).

This argument restricts the general state in Eq.~(\ref{N-particle-state-adag-general-proof}), which can be written in a simpler form as follows
\begin{eqnarray}\label{psi-separate-Step-1}
\ket{\psi}&=&\gamma\ \frac{1}{\sqrt{N!}}\Big[\,U_1\,a_1^\dag+U_2\,a_2^\dag\,\Big]^N\ket{0}\ +\ \sum_{\substack{n_1+...+n_M=N\\n_3+...+n_M\neq0}}\psi_{n_1n_2...n_M}\ \prod_{i=1}^M\,\frac{{a_i^\dag}^{n_i}}{\sqrt{n_i!}}\ket{0}\,,
\end{eqnarray}
for some normalised vector $(U_1,U_2)$ and parameter $\gamma$ (where the latter is used to restore proper normalisation).

\vspace{0.4cm}
\textsf{\textit{\textbf{\myuline{Part 2}.} {Proof of the inductive step}}.}\vspace{0.3cm}

Consider the following two-mode unitary transformation on modes 1 \& 2
\begin{eqnarray}
\mathbb{U}&:=&\left(\begin{array}{rr}\overline{U}_2&-\overline{U}_1\\U_1&U_2\end{array}\right)\,,
\end{eqnarray}
where $(U_1,U_2)$ is the normalised vector in Eq.~(\ref{psi-separate-Step-1}). It follows that the inverse transformation $\mathbb{U}^{-1}$ applied to the state in Eq.~(\ref{psi-separate-Step-1}) yields
\begin{eqnarray}\label{U-psi-1}
\ket{\tilde{\psi}}\ :=\ \mathbb{U}^{-1}\ket{\psi}&\stackrel{\scriptscriptstyle{(\ref{U-a-dag})}}{=}&\gamma\ \frac{{a_2^\dag}^{N}}{\sqrt{N!}}\ +\sum_{\substack{n_1+...+n_M=N\\n_3+...+n_M\neq0}}\tilde{\psi}_{n_1n_2...n_M}\ \prod_{i=1}^M\,\frac{{a_i^\dag}^{n_i}}{\sqrt{n_i!}}\ket{0}\\\label{U-psi-2}
&=&\gamma\ \frac{{a_2^\dag}^{N}}{\sqrt{N!}}\ +\ \sum_{\bm{{\color{NavyBlue}{k}}}=0}^{N-1}\ \sum_{\substack{n_1+n_3...+n_M=N-\bm{{\color{NavyBlue}{k}}}\\n_3+...+n_M\neq0}}\tilde{\psi}_{n_1\bm{{\color{NavyBlue}{k}}}\,n_2...n_M}\ \frac{{a_1^\dag}^{n_1}}{\sqrt{n_1!}}\frac{\bm{{\color{NavyBlue}{a_2^\dag}^{k}}}}{\sqrt{\bm{{\color{NavyBlue}{k}}}!}}\prod_{i=3}^M\,\frac{{a_i^\dag}^{n_i}}{\sqrt{n_i!}}\ket{0}\,,
\end{eqnarray}
with some new coefficients $\tilde{\psi}_{n_1n_2...n_M}$. Note that because $\mathbb{U}$ acts only on modes 1 \& 2 (and preserves the number of particles), the second sum in Eq.~(\ref{U-psi-1}) continues to collect all terms with at least \textit{one} particle in modes 3\,,\,...\,,\,M. This property is inherited from Eq.~(\ref{psi-separate-Step-1}). For further convenience, the second sum in Eq.~(\ref{U-psi-2}) is reorganised by explicitly pulling out the sum over $\bm{{\color{NavyBlue}{a_2^\dag}^{k}}}$ (which for better visibility are marked in {\color{NavyBlue}{\textbf{bold blue}}}).

Now, consider putting a detector in mode 2 and post-select on a given number of registered particles. See Fig.~\ref{Fig-InductiveStep} (on the right). This results in an even-ready scheme which in the event of $\bm{{\color{NavyBlue}{k}}}=0,1,...\,,N-1$ detections leaves the system prepared in the following respective state
\begin{eqnarray}\label{U-chi-tilde-1}
\ket{\tilde{\chi}_{\bm{{\color{NavyBlue}{k}}}}}&\sim&\sum_{\substack{n_1+n_3+...+n_M=N-\bm{{\color{NavyBlue}{k}}}\\n_3+...+n_M\neq0}}\tilde{\psi}_{n_1\bm{{\color{NavyBlue}{k}}}\,n_3...n_M}\ \frac{{a_1^\dag}^{n_1}}{\sqrt{n_1!}}\prod_{i=3}^M\,\frac{{a_i^\dag}^{n_i}}{\sqrt{n_i!}}\ket{0}\\\label{U-chi-tilde-2}
&\sim&\frac{1}{(N-\bm{{\color{NavyBlue}{k}}})!}\Big[\,\widetilde{U}_1\,a_1^\dag+\widetilde{U}_3\,a_3^\dag+...+\widetilde{U}_M\,a_M^\dag\,\Big]^{N-\bm{{\color{NavyBlue}{k}}}}\ket{0}\,,
\end{eqnarray}
where Eq.~(\ref{U-chi-tilde-1}) readily follows from Eq.~(\ref{U-psi-2}) since post-selection picks out the terms with $\bm{{\color{NavyBlue}{a_2^\dag}^{k}}}$. Note that this state is supported on $M-1$ modes $\{1,3,4,...\,,M\}$. Therefore, we can use the {\textit{inductive assumption}} about the validity of \textbf{Claim~\ref{claim2}} for $M-1$ modes. Form the requirement that the resulting state $\ket{\tilde{\chi}_{\bm{{\color{NavyBlue}{k}}}}}$ is supposed \textit{{not}} to show non-local correlations follows that it must be of the single-mode type. Hence it can be written in the form of Eq.~(\ref{U-chi-tilde-2}) for some $\widetilde{U}_1,\widetilde{U}_3,...\,,\widetilde{U}_M$ (see the  \textbf{Definition~\ref{Def-single-mode-type}} for of a single mode type state supported on modes $1,3,...\,,M$).

A closer look at both Eqs.~(\ref{U-chi-tilde-1}) and (\ref{U-chi-tilde-2}) reveals that 
\begin{eqnarray}\label{no-mode-1}
\tilde{\psi}_{j\,\bm{{\color{NavyBlue}{k}}}\,n_3...n_M}=0&\quad&\text{for \ $j\neq0$}\,,
\end{eqnarray}
and each $\bm{{\color{NavyBlue}{k}}}=0,1,...\,,N-1$.
This can be observed from the following argument (valid for each $\bm{{\color{NavyBlue}{k}}}=0,1,...\,,N-1$):\vspace{0.05cm}

1) There is no term ${a_1^\dag}^{\scriptscriptstyle{N-\bm{{\color{NavyBlue}{k}}}}}$ in Eq.~(\ref{U-chi-tilde-1}) (since for each term in the sum one index $n_3,...\,,n_M$ must be non-zero).
\vspace{0.05cm}

2) There is a term ${a_1^\dag}^{\scriptscriptstyle{N-\bm{{\color{NavyBlue}{k}}}}}$ in Eq.~(\ref{U-chi-tilde-2}), unless $\widetilde{U}_1=0$ (since this term figures in the multinomial expansion).
\vspace{0.05cm}

3) Therefore we must have $\widetilde{U}_1=0$, which entails the lack of any term with ${a_1^\dag}^{\,j}$ in both expressions Eqs.~(\ref{U-chi-tilde-1})/(\ref{U-chi-tilde-2}).
\vspace{0.2cm}

\noindent From Eq.~(\ref{no-mode-1}) we conclude that the state $\ket{\tilde{\psi}}$ in Eq.~(\ref{U-psi-1}) is supported on $M-1$ modes $\{2,3,...\,,M\}$, i.e., there is {\textit{no}} particles in mode 1. In other words, we have
\begin{eqnarray}\label{U-psi-3}
\ket{\tilde{\psi}}&=&\mathbb{U}^{-1}\ket{\psi}\ \stackrel{\scriptscriptstyle{(\ref{no-mode-1})}}{=}\ \gamma\,{a_2^\dag}^{N}\ +\sum_{\substack{n_2+...+n_M=N\\n_3+...+n_M\neq0}}\tilde{\psi}_{0\,n_2...n_M}\ \prod_{i=1}^M\,\frac{{a_i^\dag}^{n_i}}{\sqrt{n_i!}}\ket{0}\,.
\end{eqnarray}
This means that we can use the {\textit{inductive assumption}} again. Since it \textit{{cannot}} reveal non-local correlations, this state is of a single-mode type, i.e., there is a unitary transformation $\mathbb{V}$ such that (cf. Eq.~(\ref{single-mode-def}) for modes $2,3,...\,,M$)
\begin{eqnarray}\label{U-psi-V}
\ket{\tilde{\psi}}&=&\frac{1}{\sqrt{N!}}\Big[\,{\sum}_{i=2}^M\,V_i\,a_i^\dag\,\Big]^N\ket{0}\ =\ \mathbb{V}\ket{0,N,0,...\,,0}\,.
\end{eqnarray}
Therefore, we can write 
\begin{eqnarray}
\ket{\psi}&\stackrel{\scriptscriptstyle{(\ref{U-psi-1})}}{=}&\mathbb{U}\ket{\tilde{\psi}}\ \stackrel{\scriptscriptstyle{(\ref{U-psi-V})}}{=}\ \mathbb{U}\,\mathbb{V}\ket{0,N,0,...\,,0}\ =\ \mathbb{U}\,\mathbb{V}\,\mathbb{S}\ket{N,0,0,...\,,0}\,,
\end{eqnarray}
where $\mathbb{S}$ is the swap operation $a_1^\dag\leftrightarrow a_2^\dag$. This ends the proof of the \textit{inductive step} since the unitary $\mathbb{U}\mathbb{V}$ reduces the state $\ket{\psi}$ to the single-mode mode (for the exact form in Eq.~(\ref{WWW}) take $\mathbb{W}:=\mathbb{U}\mathbb{V}\mathbb{S}$). 

\clearpage

\vspace{0.4cm}
\begin{center}
\textsf{\textbf{D. Proof of Lemma~\ref{equivariance}}}\vspace{0.3cm}
\end{center}

The construction of the local model in the \textbf{Methods} section consists of the description of the hidden variable space Eq.~(\ref{HV-space}) and the action of the basic elements in Eqs.~(\ref{def-beam-splitter})\,-\,(\ref{def-detector}) of passive linear optical setups. The variables are assumed to propagate locally along the paths wired into a complex circuit in which the gates determine evolution. Each path is finally measured providing classical information revealed by detectors, which concludes the experiment. 

Our goal is to describe the observed statistics for a certain class of initial preparations based on the arrangement of the optical circuit. Those initial preparations, associated to the singe-mode type states $\ket{\psi}$ in \textbf{Definition~\ref{Def-single-mode-type}}, are given by the epistemic state
$\mu_\psi({\lambda})$ in Eq.~(\ref{epistemic-state}), i.e.,
\begin{eqnarray}\label{epistemic-state-proof}
\mu_\psi({\lambda})&:=&\tfrac{1}{2\pi}\,\delta_{\vec{\alpha}\sim\vec{\alpha}_\psi}\,\binom{N}{\vec{k}}\,\prod_{i=1}^M\,|{\alpha}_{\scriptscriptstyle{i}}|^{2k_i}\,.
\end{eqnarray}
Let us check that such defined distributions are well-normalised, i.e.,
\begin{eqnarray}
\int_\Lambda d{\lambda}\,\mu_\psi({\lambda})&\stackrel{\scriptscriptstyle{(\ref{HV-space})}}{=}&\int_{\mathbb{C}^M} d\vec{\alpha}\ \tfrac{1}{2\pi}\ \delta_{\vec{\alpha}\sim\vec{\alpha}_\psi}\sum_{\vec{k}\in\mathbb{N}^M}\binom{N}{\vec{k}}\,\prod_{i=1}^M\,|{\alpha}_{\scriptscriptstyle{i}}|^{2k_i}\ =\ \int_0^{2\pi}d\varphi\ \tfrac{1}{2\pi}\ \big(|{\alpha_\psi}_{\scriptscriptstyle{1}}|^2+...+|{\alpha_\psi}_{\scriptscriptstyle{M}}|^2\big)^N\ =\ 1\,,
\end{eqnarray}
where we have used the multinomial expansion and the fact that $\vec{\alpha}_\psi$ is normalised $\Vert\vec{\alpha}_\psi\Vert=1$.

It is straightforward to see that the measurement of the epistemic state
$\mu_\psi({\lambda})$ in Eq.~(\ref{epistemic-state}) by an array of detectors counting the particles, as defined in Eq.~(\ref{def-detector}), complies with the quantum statistics described by Eq.~(\ref{quantum-statistics}), i.e., we have
\begin{eqnarray}\label{HV-statistics}
\mu_\psi({\lambda})&\,\Longrightarrow\,&\text{Prob}_{\mu_\psi({\lambda})}(n_1,...\,,n_M)\ =\ \int_{\mathbb{C}^M} d\vec{\alpha}\ \tfrac{1}{2\pi}\ \delta_{\vec{\alpha}\sim\vec{\alpha}_\psi}\sum_{\vec{k}\in\mathbb{N}^M}\binom{N}{\vec{k}}\,\prod_{i=1}^M\,|{\alpha}_{\scriptscriptstyle{i}}|^{2k_i}\\\nonumber&&\qquad\qquad\qquad\qquad\qquad=\ \binom{N}{\vec{n}}\,\prod_{i=1}^M\,|{\alpha_\psi}_{\scriptscriptstyle{i}}|^{2n_i}\ \stackrel{\scriptscriptstyle{(\ref{probability-single-mode-alpha})}}{=}\ |\psi_{n_1\,...\,n_M}|^2\ =\ \text{Prob}_{\ket{\psi}}(n_1,...\,,n_M)\,.
\end{eqnarray}

The essence of \textbf{Lemma~\ref{equivariance}} is the commutation of the following diagram
\begin{eqnarray}\label{diagram}
\qquad&\begin{matrix}\xymatrix{\ \ \ \ket{\psi}\ \ \ \ar[r]^{}\ar[d]^{} &\ \ \ket{\psi'}\ar[d]^{}\\
\mu_\psi({\lambda})\ \ \ar[r]^{} &\ \ \mu_{\psi'}({\lambda})}
\end{matrix}&
\end{eqnarray}
for each single-mode type state $\ket{\psi}$ and any passive linear optical circuit.
Here, the evolution of $\ket{\psi}\stackrel{\scriptscriptstyle{\mathbb{U}}}{\longrightarrow}\ket{\psi'}=\mathbb{U}\ket{\psi}$ is given by the quantum rules in Eqs.~(\ref{BS}) and (\ref{PS}), whereas the associated epistemic state $\mu_\psi({\lambda})$ evolves according to the definitions in Eqs.~(\ref{def-beam-splitter}) and (\ref{def-phase-shifter}). It means that the shape of the epistemic distribution is preserved (remains within the defined class), and evolves congruently with the quantum state, i.e., we have
\begin{eqnarray}\label{equivariance-proof}
\mu_\psi({\lambda})&\xymatrix{\ar[r]^{\scriptscriptstyle{\mathbb{U}}} &}&\mu'({\lambda})\ =\ \mu_{\psi'}({\lambda})\,,
\end{eqnarray}
where $\ket{\psi}\stackrel{\scriptscriptstyle{\mathbb{U}}}{\longrightarrow}\ket{\psi'}=\mathbb{U}\ket{\psi}$. This will be proved below.

\begin{proof}[\myuline{\textbf{Proof of Eq.~(\ref{equivariance-proof})}}]\ \vspace{0.1cm}\\
\indent The model assumes that the basic elements combine in a modular way, meaning that the gates operate independently and their action is limited to the paths in which they are involved. This simplifies the analysis of a complex optical design, which can be reduced to its most basic elements. It is thus sufficient to prove the property in Eq.~(\ref{equivariance-proof}) for phase shifters and beam splitters. Then, the conclusion regarding arbitrary passive linear optical circuits follows from the modularity of optical designs. 

We consider an epistemic state $\mu_\psi({\lambda})$ in Eq.~(\ref{epistemic-state-proof}) and ask how it transforms under the action of a single gate. Let us start by writing out the stochastic map $\text{Prob}(\lambda'|\lambda)$ which describe the probability that the variable $\lambda'$ evolves to $\lambda$, i.e. $\lambda'\longrightarrow\lambda$. For the respective gates, as defined in Eq.~(\ref{def-phase-shifter}) and (\ref{def-beam-splitter}), it takes the following form
\begin{eqnarray}\label{stochastic-map-PS}
\text{for \textit{phase shifter}:\,}&&\quad\text{Prob}(\lambda|\lambda')\ =\ \delta_{\vec{\alpha}\text{=}\,\vec{\alpha}'\mathbb{U}}\ \delta_{\vec{k}\text{=}\,\vec{k}'}\ ,
\\\label{stochastic-map-BS}
\text{for \textit{beam splitter}:}&&\quad\text{Prob}(\lambda|\lambda')\ =\ \delta_{\vec{\alpha}\text{=}\,\vec{\alpha}'\mathbb{U}}\prod_{j\neq s,t}\delta_{k_j\text{=}\,k_j'}\begin{pmatrix}k_s'+k_t'\\k_s\,,k_t\end{pmatrix}\tfrac{|\alpha_s|^{2k_s}\,|\alpha_t|^{2k_t}}{(|\alpha_s|^2+|\alpha_t|^2)^{k_s'+k_t'}}\ ,
\end{eqnarray}
where the unitary $\mathbb{U}$ is associated with the respective gate as described in Eqs.~(\ref{PS}) and (\ref{BS}).

The \textit{phase shifter} is a deterministic gate for which we get
\begin{eqnarray}\label{equivariance-proof-PS}
\mu_\psi({\lambda})&\!\!\xymatrix{\ar[r]^{} &}\!\!&\mu'({\lambda})\ =\ \int_\Lambda d{\lambda'}\,\text{Prob}(\lambda|\lambda')\,\mu_\psi({\lambda'})\ \stackrel{\scriptscriptstyle{(\ref{epistemic-state-proof})(\ref{stochastic-map-PS})}}{=}\ \tfrac{1}{2\pi}\,\delta_{\vec{\alpha}\sim\vec{\alpha}_\psi\mathbb{U}}\,\binom{N}{\vec{k}}\,\prod_{i=1}^M\,|{\alpha_i}|^{2k_i}\ =\ \mu_{\psi'}({\lambda})
\,,
\end{eqnarray}
where $\vec{\alpha}_{\psi'}=\vec{\alpha}_\psi\mathbb{U}$ corresponds to the state $\ket{\psi'}=\mathbb{U}\ket{\psi}$ transformed by the phase shifter as described by Eq.~(\ref{PS}); see  \textbf{Proposition~\ref{alpha-transformation}}. This proves the equivariance Eq.~(\ref{equivariance-proof}) for phase shifters.

The \textit{beam splitter} is a stochastic gate which leads to the following evolution
\begin{eqnarray}\label{equivariance-proof-BS-1}
\!\!\!\!\!\!\!\!\!\!\!\!\!\!\mu_\psi({\lambda'})&\!\!\xymatrix{\ar[r]^{} &}\!\!&\mu'({\lambda})\ =\ \int_\Lambda d{\lambda'}\ \text{Prob}(\lambda|\lambda')\,\mu_\psi({\lambda'})\ =\ \int_{\mathbb{C}^M} d\vec{\alpha}'\sum_{\vec{k}'\in\mathbb{N}^M}\text{Prob}(\lambda|\lambda')\,\mu_\psi({\lambda'})
\\\label{equivariance-proof-BS-2}
&&\qquad\ \stackrel{\scriptscriptstyle{(\ref{epistemic-state-proof})(\ref{stochastic-map-BS})}}{=}\ \tfrac{1}{2\pi}\int_{\mathbb{C}^M} d\vec{\alpha}'\  \delta_{\vec{\alpha}\text{=}\,\vec{\alpha}'\mathbb{U}}\ \delta_{\vec{\alpha}'\sim\vec{\alpha}_\psi}\sum_{\vec{k}'\in\mathbb{N}^M}
\prod_{j\neq s,t}\delta_{k_j\text{=}\,k_j'}\begin{pmatrix}k_s'+k_t'\\k_s\,,k_t\end{pmatrix}\tfrac{|\alpha_s|^{2k_s}\,|\alpha_t|^{2k_t}}{(|\alpha_s|^2+|\alpha_t|^2)^{k_s'+k_t'}}\ 
\binom{N}{\vec{k}'}\,\prod_{i=1}^M\,|{\alpha}_{\scriptscriptstyle{i}}|^{2k_i'} 
\\\label{equivariance-proof-BS-3}
&&\qquad\ \ \ \,=\ \tfrac{1}{2\pi}\,\delta_{\vec{\alpha}\sim\vec{\alpha}_\psi\mathbb{U}}\sum_{\vec{k}'\in\mathbb{N}^M}
\prod_{j\neq s,t}\delta_{k_j\text{=}\,k_j'}\begin{pmatrix}k_s'+k_t'\\k_s\,,k_t\end{pmatrix}\tfrac{|\alpha_s|^{2k_s}\,|\alpha_t|^{2k_t}}{(|\alpha_s|^2+|\alpha_t|^2)^{k_s'+k_t'}}\ 
\binom{N}{\vec{k}'}\,\prod_{i=1}^M\,|\alpha_i|^{2k_i'}
\\\label{equivariance-proof-BS-4}
&&\qquad\ \ \ \,=\ \tfrac{1}{2\pi}\,\delta_{\vec{\alpha}\sim\vec{\alpha}_\psi\mathbb{U}}\,\binom{N}{\vec{k}}\,\prod_{i=1}^M\,|\alpha_i|^{2k_i}\sum_{k_s',k_t'}
\begin{pmatrix}k_s'+k_t'\\k_s\,,k_t\end{pmatrix}\tfrac{|\alpha_s|^{2k_s'}\,|\alpha_t|^{2k_t'}}{(|\alpha_s|^2+|\alpha_t|^2)^{k_s'+k_t'}}\ 
\tfrac{k_s!k_t!}{k_s'!k_t'!}\ ,
\end{eqnarray}
where in the last line we have reshuffled the terms and carried out all the Kronecker deltas $\prod_{j\neq s,t}\delta_{k_j\text{=}\,k_j'}$.

Now, observe that the second binomial in Eq.~(\ref{equivariance-proof-BS-4}) is non-vanishing only for $k_s+k_t=k_s'+k_t'$. This simplifies the above expression as follows
\begin{eqnarray}\label{equivariance-proof-BS-5}
\qquad\qquad\quad\mu'({\lambda})&=&\tfrac{1}{2\pi}\,\delta_{\vec{\alpha}\sim\vec{\alpha}_\psi\mathbb{U}}\,\binom{N}{\vec{k}}\,\prod_{i=1}^M\,|\alpha_i|^{2k_i}\sum_{k_s',k_t'}
\begin{pmatrix}k_s+k_t\\k_s'\,,k_t'\end{pmatrix}\tfrac{|\alpha_s|^{2k_s'}\,|\alpha_t|^{2k_t'}}{(|\alpha_s|^2+|\alpha_t|^2)^{k_s+k_t}}\\\label{equivariance-proof-BS-6}
&=&\tfrac{1}{2\pi}\,\delta_{\vec{\alpha}\sim\vec{\alpha}_\psi\mathbb{U}}\,\binom{N}{\vec{k}}\,\prod_{i=1}^M\,|\alpha_i|^{2k_i}\,\tfrac{(|\alpha_s|^2+|\alpha_t|^2)^{k_s+k_t}}{(|\alpha_s|^2+|\alpha_t|^2)^{k_s+k_t}}\ =\ \tfrac{1}{2\pi}\,\delta_{\vec{\alpha}\sim\vec{\alpha}_\psi\mathbb{U}}\,\binom{N}{\vec{k}}\,\prod_{i=1}^M\,|\alpha_i|^{2k_i}\ =\ \mu_{\psi'}({\lambda})\,,
\end{eqnarray}
where $\vec{\alpha}_{\psi'}=\vec{\alpha}_\psi\mathbb{U}$ corresponds to the state $\ket{\psi'}=\mathbb{U}\ket{\psi}$ transformed by the beam splitter as described by Eq.~(\ref{BS}); see \textbf{Proposition~\ref{alpha-transformation}}. 
This proves the equivariance Eq.~(\ref{equivariance-proof}) for beam splitters.

\end{proof}

\end{document}